\documentclass[acmsmall, screen, nonacm]{acmart}
\AtBeginDocument{%
  }

\usepackage{graphicx}
\usepackage{subcaption}
\usepackage{wrapfig}
\usepackage[export]{adjustbox}
\usepackage{multirow}
\usepackage{booktabs}
\usepackage{threeparttable}

\usepackage{interval}
\intervalconfig{soft open fences}

\usepackage{cleveref}

\usepackage{tcolorbox}

\usepackage{mathtools}
\DeclarePairedDelimiter\rbra{\lparen}{\rparen}

\DeclarePairedDelimiter\cbra{\{}{\}}
\DeclarePairedDelimiter\abs{\lvert}{\rvert}
\DeclarePairedDelimiter\Abs{\lVert}{\rVert}
\DeclarePairedDelimiter\ceil{\lceil}{\rceil}
\DeclarePairedDelimiter\floor{\lfloor}{\rfloor}

\DeclarePairedDelimiter\sem{\llbracket}{\rrbracket}

\DeclareMathOperator{\tr}{tr}

\usepackage[thinc]{esdiff}

\newcommand{\ii}{\mathrm{i}}

\newcommand{\PREPARE}{\texttt{PREPARE}}
\newcommand{\SELECT}{\texttt{SELECT}}
\newcommand{\PREPAREC}{\texttt{PREPAREC}}
\newcommand{\SELECTC}{\texttt{SELECTC}}

\newcommand{\LindFront}{\textsf{LindFront}\@}
\newcommand{\ChannelIR}{\textsf{ChannelIR}\@}

\usepackage{tikz}
\usetikzlibrary{arrows.meta, positioning, shapes.geometric, calc}
\usetikzlibrary{quantikz2}
\usepackage{svg}
\usepackage{xcolor}

\usepackage{amsthm}
\newtheoremstyle{problem-style}
  {1ex}
  {1ex}
  {\normalfont}
  {}
  {\itshape}
  {.}
  {0.5em}
  {\thmname{#1}  \thmnumber{#2}  \thmnote{(#3)}}
\theoremstyle{problem-style}
\newtheorem{problem}{Problem}[section]

\usepackage{algorithm}
\usepackage{algpseudocode}
\floatname{algorithm}{Algorithm}

\begin{document}

\title{A Compilation Framework for Quantum Simulation of Non-unitary Dynamics}


\author{Qifan Huang}
\authornote{These authors contributed equally to this work.}
\email{huangqf@ios.ac.cn}
\affiliation{
    \department{Key Laboratory of System Software (Chinese Academy of Sciences) and State Key Laboratory of Computer Science}
    \institution{Institute of Software, Chinese Academy of Sciences}
    \city{Beijing}
    \country{China}
}
\affiliation{%
    \institution{University of Chinese Academy of Sciences}
    \city{Beijing}
    \country{China}
}
\author{Minbo Gao}
\authornotemark[1]
\email{gaomb@ios.ac.cn}
\affiliation{
    \department{Key Laboratory of System Software (Chinese Academy of Sciences) and State Key Laboratory of Computer Science}
    \institution{Institute of Software, Chinese Academy of Sciences}
    \city{Beijing}
    \country{China}
}
\affiliation{%
    \institution{University of Chinese Academy of Sciences}
    \city{Beijing}
    \country{China}
}
\author{Li Zhou}
\authornote{Corresponding author: Li Zhou, Mingsheng Ying}
\email{zhou31416@gmail.com}
\affiliation{
    \department{Key Laboratory of System Software (Chinese Academy of Sciences) and State Key Laboratory of Computer Science}
    \institution{Institute of Software, Chinese Academy of Sciences}
    \city{Beijing}
    \country{China}
}

\author{Mingsheng Ying}
\authornotemark[2]
\email{mingsheng.ying@uts.edu.au}
\affiliation{%
    \institution{University of Technology Sydney}
    \city{Sydney}
    \country{Australia}
}
\begin{abstract}
Most quantum compilers assume programs are reversible unitary circuits. This fits closed-system algorithms, but not open-system simulation, where the natural program objects are quantum channels describing non-unitary dynamics. We present a channel-first compilation framework that treats channels as first-class compilation objects. Our core IR, \ChannelIR, represents channels explicitly in Kraus form, a standard channel representation, with Pauli-sum structure, enabling algebraic rewrites before circuit synthesis. We instantiate the framework with \LindFront, a frontend that lowers continuous-time Lindbladian generators to short-time channels, and a backend that compiles these channels to executable circuits with structure-aware optimizations. On Lindbladian and channel-simulation benchmarks, the optimized pipeline reduces gate count by up to 99\% over an unoptimized channel-first baseline and scales better than circuit-first Stinespring compilation.
\end{abstract}




\maketitle

\section{Introduction}


Simulating quantum dynamics is widely regarded as one of the foundational applications of quantum computing~\cite{Fey82}.
Many quantum algorithms have been proposed for Hamiltonian simulation \cite{lloyd1996simulate,Lauvergnat2007simple,Childs2012hamiltonian,Childs2017efficient,childs2018andrew}, which models the time evolution of closed quantum systems governed by reversible (unitary) transformations.
To enable their practical applications, several programming and compilation frameworks for Hamiltonian simulation have been developed in the last few years \cite{Campbell_2019,Li2022paulihedral,Peng2024simuq,Cao2025marqsim}. More precisely, prior work has developed intermediate representations, synthesis procedures, and resource-aware lowering techniques that translate high-level Hamiltonian descriptions into executable quantum circuits. 

However, many physically relevant and algorithmically useful quantum processes do not correspond to closed systems and are not unitary. 
Realistic quantum systems interact with their environments, giving rise to dissipation, decoherence, and other irreversible effects.
These are commonly modeled as \emph{Markovian open-system dynamics}, which are characterized by Lindbladian master equations~\cite{GKS76,Lin76}.
Recent quantum algorithms increasingly target this setting, not only to simulate open physical systems more efficiently and accurately~\cite{KBG+11,Childs2017efficient,cleve2017lindblad,SHS+21,KSMM22,SHS+22,PW23a,PW23,DLL24,DVF+24,GKP+25,BM25,Chen2025thermal,PSZZ25,PSW25}, but also to exploit non-unitary dynamics as computational primitives for tasks such as state preparation~\cite{KB16,DCL24,Chen2025thermal,DLL25,LZL25,ZDH+25}, optimization~\cite{HLL+24,CLW+25}, and differential-equation solving~\cite{SGAZ25}.

These developments suggest an immediate motivating question: 

{\vskip 3pt}



\textbf{Question}: \textit{How should open-system quantum algorithms be programmed and compiled into executable quantum circuits?}

{\vskip 3pt}

Lindbladian simulation is a particularly important instance of this question, because it exposes the core difficulty sharply: the natural objects at the algorithmic level are no longer just unitary operators, but quantum channels describing the non-unitary open-system dynamics and generators of channels. Apart from Lindbladian simulation, more broadly, a growing number of channel-based quantum algorithms are appearing~\cite{LW23,RDB24,Dut24,TW25,LTL+25}, which also encounter this problem: program abstraction used by current quantum compilers no longer matches the semantic structure of the algorithms being compiled.

Current quantum compilation infrastructures are still largely organized around the unitary circuit model. 
In this model, programs are represented as sequences of reversible unitary gates acting on qubits, and compilation proceeds by lowering such circuits into sequences of gates in a target gate set.
This has been highly successful for closed-system algorithms, but it provides little direct support for algorithms whose natural descriptions are quantum channels rather than unitary circuits. 
In current workflows, channel-based algorithms must therefore be manually reduced to circuit-level constructions before compilation can proceed. 
This creates an \emph{abstraction mismatch}: the source program is naturally specified in terms of channels or generators, while the compiler expects a unitary circuit as input.

This mismatch has several consequences. 
First, it introduces an \emph{expressiveness gap}: programmers must encode channel semantics indirectly through circuit constructions, rather than describing the intended transformation directly. 
Second, it prevents the compiler from performing algebraic reasoning at the channel level. 
Once a channel has been  implemented as a circuit, structural information such as Kraus operators is no longer explicit, making many simplifications difficult or impossible to recover. 
Third, naive channel-to-circuit realizations often incur substantial resource overhead, for example by introducing large dilation-based constructions, excessive ancilla use, or heavily controlled multiplexed circuit structure. 
Thus, the main limitation of existing approaches is not merely that non-unitary algorithms are harder to implement, but that current compilers lack the right intermediate abstraction for representing and optimizing them. 

\paragraph{\textbf{This work.}} In this work, we propose that supporting channel-based quantum algorithms requires a different compilation perspective: 
\begin{center}
    \emph{Channels should be treated as first-class compilation objects.} 
\end{center}
Rather than forcing programmers to lower channels into circuits at the outset, a compiler should preserve channel structure explicitly, optimize it at the IR level, and postpone circuit realization until later stages. 
This is the central idea of our work: \emph{a channel-first compilation framework} for quantum programs whose natural semantics are given by quantum channels.

To realize this idea, we introduce \ChannelIR, a hierarchical intermediate representation for quantum channels\footnote{Our \ChannelIR{} can represent completely positive superoperators, which is a slight generalization of quantum channels.}. 
\ChannelIR{} represents channels explicitly in Kraus form, with Kraus operators built as block-encodings of Pauli-sum expressions or optional user-specified block-encoding primitives. 
This design is inspired in part by Cobble’s treatment of block-encodings~\cite{yuan2025cobblecompilingblockencodings}, but is tailored to channel-first compilation.
This representation is expressive enough to capture a broad class of channel-based workloads while preserving the algebraic structure needed for compiler reasoning. 
Because channels remain explicit in the IR, the compiler can apply rewrite rules and structural simplifications before any commitment is made to a concrete circuit implementation. 
At the same time, the representation aligns naturally with block-encoding- and linear-combination-of-unitaries (LCU) based circuit implementation methods, which are central to implementing non-unitary operators on quantum hardware.

To demonstrate the framework on an important and challenging workload, we instantiate it for Lindbladian simulation through a frontend called \LindFront. \LindFront{} accepts a Lindbladian generator specified by a Hamiltonian together with jump operators, and lowers it into a short-time channel approximation represented in \ChannelIR. This instantiation is important for two reasons. First, it shows how the framework handles a practically relevant class of open-system algorithms. Second, it makes concrete the broader compiler problem addressed in this paper: continuous-time generator descriptions must first be transformed into discrete channels, and those channels must then be compiled efficiently into circuits. 
By placing this process within a channel-first pipeline, our framework provides an end-to-end path from open-system models to optimized circuit implementations.

\paragraph{\textbf{Contributions and organization of the paper}} Our contributions are summarized as follows:
\begin{itemize}
    \item  Sec. 4:  \ChannelIR, a hierarchical intermediate representation for quantum channels in which Pauli sums and Kraus operators are building blocks, enabling algebraic reasoning via rewrite rules before circuit implementation; and a compilation path from \ChannelIR{} to executable quantum circuits via LCU and channel-LCU techniques.
    \item  Sec. 5: Structure-aware optimizations for the LCU circuit patterns arising in channel compilation, reducing control overhead, ancilla use, and non-Clifford cost.
    \item  Sec. 6: \LindFront, a frontend for Lindbladian simulation that transforms continuous-time Lindbladian generators into \ChannelIR, including first-order short-time approximation and higher-order series-expansion methods.
    \item Sec. 7: An evaluation on benchmarks for Lindbladian simulation and algorithm-derived quantum channels, showing that the channel-first baseline already outscales circuit-first Stinespring compilation, while the fully optimized pipeline further reduces gate count by $94.9\%$--$99.1\%$ and end-to-end compilation latency by $97.6\%$--$99.4\%$, and validates the correctness and scalability of the framework.

\end{itemize}




\section{Background}

We review some basic notions in quantum mechanics and quantum computing. For more detailed expositions, we refer readers to~\cite{NC10}.

\subsection{Models of Quantum Systems}

\paragraph{Closed quantum systems.} 

For a closed quantum system, its state space corresponds to a (complex) Hilbert space, i.e., a complete inner product space, which we usually denote as $\mathcal{H}$. In the study of quantum computation, it is usually assumed that the state space is finite-dimensional.
A pure state of a quantum system corresponds to a unit column vector in a Hilbert space.
We use the Dirac notation of kets like $\ket{\psi}$ to denote a pure state. Specifically, we use $\ket{j}$ to denote the unit column vector with only the $j$-th coordinate being $1$ and the other coordinates being $0$.

A common example is the \emph{qubit}, which is a $2$-dimensional quantum system. Generally speaking, the state of a qubit can be represented as $\ket{\psi} = \alpha\ket{0}+\beta\ket{1}$, for some $\alpha, \beta \in \mathbb{C}$ satisfying $\abs{\alpha}^2+\abs{\beta}^2 = 1$, and $\ket{0} = (1,0)^{\intercal}$ and $\ket{1} = (0,1)^{\intercal}$, where $^\intercal$ stands for the transpose operation.

The evolution of a closed quantum system is described by a unitary matrix $U$, satisfying $UU^{\dagger} = U^{\dagger}U$ = I, where $^{\dagger}$ stands for the conjugate transpose. For a quantum system starting at the state $\ket{\psi}$, the state after the evolution described by $U$ is just $U\ket{\psi}$. 
The following \emph{Pauli matrices} are famous examples of unitary matrices which are frequently used in quantum computing. They are also Hermitian (a matrix $A$ is Hermitian if $A = A^{\dagger}$).
\[
I = 
\begin{pmatrix}
    1 & 0 \\
    0 & 1
\end{pmatrix}, \quad
X = 
\begin{pmatrix}
    0 & 1 \\
    1 & 0
\end{pmatrix}, \quad
Y = 
\begin{pmatrix}
    0 & -i \\
    i & 0
\end{pmatrix}, \quad
Z = 
\begin{pmatrix}
    1 & 0 \\
    0 & -1
\end{pmatrix}.
\]

Measurement of a quantum system is a manipulation of a quantum system that yields a  result and changes the state of the system. In this paper, we will only consider \emph{projective measurements}.  A projective measurement $\mathcal{M}$ is a set of projectors $\cbra{P_i}_{i=1}^n$, where $P_i$ are orthogonal projections (i.e., $P_i = P_i^{\dagger}$ and $P_i^2 = P_i$) satisfying $\sum_{i=1}^n P_i = I$. For a quantum system in the state $\ket{\psi}$, applying measurement $\mathcal{M} = \cbra{P_i}_{i=1}^n$ will incur measurement result $i$ with probability $p_i = \Abs{P_i \ket{\psi}}^2$, leaving the post-measurement state $P_i\ket{\psi}/\sqrt{p_i}$ for $p_i > 0$.

For two quantum systems with state space $\mathcal{H}_1$ and $\mathcal{H}_2$ respectively, the composite system is described by the tensor product $\mathcal{H}_1\otimes \mathcal{H}_2$. If the first quantum system is in state $\ket{\psi}$ and the second in $\ket{\phi}$, the state of the composite system is described by $\ket{\psi}\otimes \ket{\phi}$, or $\ket{\psi}\ket{\phi}$ in short.
We use $(a_{ij})_{i,j=1}^N$ to denote the $N\times N$ matrix with $ij$-th entry being $a_{ij}$.
For an $m\times n$ matrix $A = (a_{rs})$ and $p\times q$ matrix $B = (b_{vw})$, recall that their tensor product is an $mp\times nq$ matrix with $(A \otimes B)_{p(r-1)+v,\;q(s-1)+w} = a_{rs} b_{vw}$.

\paragraph{Open quantum systems.} When one wants to describe quantum systems that may entangle with other physical systems (open quantum systems), one will use density matrices to describe mixed ensembles of states (mixed states in short). A density matrix $\rho$ is a positive semidefinite matrix with trace $\tr(\rho) = 1$, where the trace $\tr(\cdot)$ of a matrix equals to the sum of its diagonal entries. The set of all density operators on $\mathcal{H}$ is denoted as $\mathcal{D}\rbra{\mathcal{H}}$.

General operations in open quantum systems—including unitary evolution and measurements—are modeled as \emph{quantum channels}, which are completely positive and trace-preserving \emph{superoperators}. A superoperator is a linear map that maps operators into operators. A superoperator $\mathcal{E}: \mathcal{D}\rbra{\mathcal{H}} \to \mathcal{D}\rbra{\mathcal{H}}$ is called completely positive if for any Hilbert space $\mathcal{H}_{aux}$ and $\rho \in \mathcal{D}\rbra{\mathcal{H}\otimes \mathcal{H}_{aux}}$, $\rbra{\mathcal{E}\otimes I}\rbra{\rho}$ is positive;
it is called trace-preserving if for every $\rho \in \mathcal{D}\rbra{\mathcal{H}} $, $\tr\rbra{\mathcal{E}(\rho)}= \tr\rbra{\rho}$.
Completely positive superoperators, which we will simply call superoperators throughout the remainder of the paper, can be represented in \emph{Kraus form}:
a superoperator $\mathcal{E}$ can always be written as
$\mathcal{E}\rbra{\rho} = \sum_j K_j \rho K_j^{\dagger} $
for some matrices $K_j$s. The trace-preserving condition for it is equivalent to $\sum_j K_j^{\dagger}K_j = I$.


For the case where the dynamics of the quantum system is \emph{Markovian}, i.e., the future state depends only on the current state instead of the entire history, the state of the system can be characterized by the following \emph{Lindbladian master equation}:
\[
\diff{\rho}{t} = \mathcal{L}\rbra{\rho} = -\ii [\rho, H] + \sum_j \rbra*{L_j\rho L_j^{\dagger} - \frac{1}{2}\cbra*{L_j^{\dagger}L_j, \rho}},
\]
where $[A,B]\triangleq AB-BA$ denotes the commutator operation, $\cbra{A,B}\triangleq AB+BA$ denotes the anti-commutator operation,  $H$ is a Hermitian matrix representing the \emph{system Hamiltonian} describing the unitary aspect of the dynamics, and the matrices $L_j$s serve as \emph{jump operators} describing the dissipative part of the dynamics.


\subsection{Quantum Linear Algebra}

\begin{definition}[Block-Encoding]\label{def:block-encoding}
    Let $A\in \mathbb{C}^{2^n\times 2^n}$ be a matrix. For $\alpha, \varepsilon >0$ and a positive integer $a$, an $(n+a)$-qubit unitary $U$ is called an $(\alpha, a,\varepsilon)$-block-encoding of $A$, if
    \[
    \Abs{\alpha\bra{0}^{\otimes a}U \ket{0}^{\otimes a} -A }\le \varepsilon,
    \]
    where $\Abs{\cdot}$ denotes the operator norm of matrix.
\end{definition}
If $\varepsilon = 0$ and $a$ is clear from the context, we will simply call it an $\alpha$-block-encoding. For $\alpha=1$, we will directly call it a block-encoding and use the notation $\mathcal{B}[A]$ to denote it.
In simple words, this means
\[
U \approx 
\begin{pmatrix}
    A & *\\
    * & *
\end{pmatrix}.
\]
Block-encoding provides a convenient way to represent a general matrix as a sub-block of a unitary for implementing them in quantum circuits.


To implement block-encodings of operators expressed as sums of simpler terms, we will rely on the linear-combination-of-unitaries (LCU) technique shown below.

\begin{theorem}[Linear Combination of Block-Encodings]\label{thm:LCU}
    For an $m$-dimensional vector $y = (y_0, \ldots, y_{m-1}) \in \mathbb{C}^m$ with $\sum_j \abs{y_j}\le \beta$, suppose there exists a pair of $b$-qubit unitaries ($b\ge \ceil{\log m}$ is an integer) $\PREPARE_L$ and $\PREPARE_R$ such that
    \[
    \begin{aligned}
        \PREPARE_L\ket{0}^{\otimes b} &= \sum_{j=0}^{2^b-1}c_j\ket{j} ,\quad
        \PREPARE_R\ket{0}^{\otimes b} = \sum_{j=0}^{2^b-1} d_j\ket{j}, \\
        \sum_{j=0}^{m-1}\abs*{\beta c_j^* d_j - y_j} &\le \varepsilon_1,
        \quad \text{and} \quad c_j^*d_j  = 0, \;\forall j=m, \dots, 2^b-1.
    \end{aligned}
    \]
    Let $A_0, A_1, \ldots, A_{m-1}$ be $s$-qubit operators, and $U_j$ is an $\rbra{\alpha, a,\varepsilon_2}$-block-encoding of $A_j$ for each $j$.  We use $\SELECT$ to denote the $(s+ a+ b)$-qubit unitary
    \[
    \SELECT = \sum_{j=0}^{m-1} \ket{j}\bra{j}\otimes U_j + \rbra*{\rbra*{I-\sum_{j=0}^{m-1}\ket{j}\bra{j}}\otimes I^{\otimes (s+a)}}.
    \]
Then  $\rbra{\PREPARE_L^{\dagger}\otimes I^{\otimes (s+a)}} \SELECT  \rbra{\PREPARE_R\otimes I^{\otimes (s+a)}}$
    is an $(\alpha\beta, a+b, \alpha\varepsilon_1+\alpha\beta\varepsilon_2)$-block-encoding
    of $\sum_{j=0}^{m-1} y_j A_j$.
\end{theorem}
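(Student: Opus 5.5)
The plan is to compute the top-left block of $W \triangleq (\PREPARE_L^{\dagger}\otimes I)\,\SELECT\,(\PREPARE_R\otimes I)$ directly. Then we bound its distance to $\sum_j y_j A_j$ by the triangle inequality, splitting the error into a coefficient part and a block-encoding part. Throughout, order the registers as (LCU register of $b$ qubits) $\otimes$ (block-encoding ancilla of $a$ qubits) $\otimes$ (system of $s$ qubits). The total ancilla is then $\ket{0}^{\otimes b}\ket{0}^{\otimes a}$, i.e.\ $a+b$ qubits, which matches the claimed parameter.

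\textbf{Step 1: extract the block.} Let $\ket{0_b}=\ket{0}^{\otimes b}$ and $\ket{0_a}=\ket{0}^{\otimes a}$. Then
\[
\bra{0_b}\bra{0_a} W \ket{0_b}\ket{0_a} = \sum_{j,k} c_j^{*} d_k\, \bra{j}\bra{0_a}\,\SELECT\,\ket{k}\ket{0_a}.
\]
Since $\SELECT$ is block diagonal in the LCU register, only $j=k$ survives. For $j<m$ the term is $c_j^*d_j\,\bra{0_a}U_j\ket{0_a}$. For $j\ge m$ the term is $c_j^*d_j\,\bra{0_a}I\ket{0_a}=c_j^*d_j I$, which vanishes by the hypothesis $c_j^*d_j=0$. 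Hence the block equals $\sum_{j<m} c_j^*d_j\,\bra{0_a}U_j\ket{0_a}$.

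\textbf{Step 2: error bound.} We need $\Abs{\alpha\beta\sum_j c_j^*d_j\, \bra{0_a}U_j\ket{0_a} - \sum_j y_j A_j}$. Write $B_j = \alpha\bra{0_a}U_j\ket{0_a}$, so that $\Abs{B_j - A_j}\le\varepsilon_2$. Insert and subtract $\sum_j \beta c_j^* d_j A_j$:
\[
\sum_j \beta c_j^*d_j (B_j - A_j) \;+\; \sum_j (\beta c_j^* d_j - y_j) A_j .
\]
The first sum has norm at most $\varepsilon_2 \beta\sum_j |c_j||d_j| \le \beta\varepsilon_2$, using Cauchy–Schwarz and the fact that $\PREPARE_L\ket{0_b}$ and $\PREPARE_R\ket{0_b}$ are unit vectors.

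\textbf{Main obstacle.} The second sum requires a bound on $\Abs{A_j}$. The natural estimate is $\Abs{A_j}\le \Abs{B_j}+\varepsilon_2\le \alpha+\varepsilon_2$, which would leave a stray $\varepsilon_1\varepsilon_2$ cross term. To match the stated bound exactly, I would instead regroup the terms as
\[
\sum_j(\beta c_j^*d_j - y_j)B_j \;+\; \sum_j y_j(B_j - A_j).
\]
Using $\Abs{B_j}\le\alpha$ (since $U_j$ is unitary), the first sum is at most $\alpha\varepsilon_1$. The second sum is at most $\varepsilon_2\sum_j|y_j| \le \beta\varepsilon_2$.

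Multiplying back by the overall normalization gives total error at most $\alpha\varepsilon_1+\beta\varepsilon_2 \le \alpha\varepsilon_1+\alpha\beta\varepsilon_2$ whenever $\alpha\ge 1$. If $\alpha<1$ is allowed, the bookkeeping of which side carries the factor $\alpha$ must be checked carefully; the stated bound $\alpha\beta\varepsilon_2$ corresponds to measuring the block-encoding error of $U_j$ before scaling by $\alpha$. The remaining step is to verify this normalization convention, which is routine.
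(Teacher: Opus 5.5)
The paper states this theorem as background and gives no proof of it. The standard argument is exactly your Step~1 followed by your second regrouping $\sum_j(\beta c_j^*d_j-y_j)B_j+\sum_j y_j(B_j-A_j)$, using $\Abs{B_j}\le\alpha$ and $\sum_j\abs{y_j}\le\beta$. Up to that point your proposal is correct. What it proves is the error bound $\alpha\varepsilon_1+\beta\varepsilon_2$, and this gives the stated $\alpha\varepsilon_1+\alpha\beta\varepsilon_2$ only when $\alpha\ge 1$.

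\textbf{The gap is your last step.} You call the case $\alpha<1$ a routine check of normalization conventions, but it is not routine. \Cref{def:block-encoding} bounds $\Abs{\alpha\bra{0}^{\otimes a}U\ket{0}^{\otimes a}-A}$, i.e.\ the error \emph{after} scaling by $\alpha$. Under that definition the statement is false for $\alpha<1$. Here is a counterexample.

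Take $m=1$ and $b=1$, with $\PREPARE_L=\PREPARE_R=I$, so $c_0=d_0=1$ and $c_1=d_1=0$. Set $y_0=\beta=1$, pick any $\varepsilon_1\in(0,\varepsilon_2)$, and let $U_0=I$, $\alpha=\tfrac12$, $A_0=(\tfrac12+\varepsilon_2)I$.

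Then $U_0$ is a $(\tfrac12,a,\varepsilon_2)$-block-encoding of $A_0$, and the whole circuit $W$ is the identity. Hence $\Abs{\alpha\beta\bra{0}^{\otimes(a+b)}W\ket{0}^{\otimes(a+b)}-A_0}=\varepsilon_2$. This exceeds the claimed error $\alpha\varepsilon_1+\alpha\beta\varepsilon_2=\tfrac12(\varepsilon_1+\varepsilon_2)$. So no bookkeeping can close this case, and the $\beta\varepsilon_2$ term in your bound cannot in general be improved to $\alpha\beta\varepsilon_2$.

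You correctly guessed that the stated term would come from measuring the error before scaling. However, that is a different hypothesis, $\Abs{\bra{0}^{\otimes a}U_j\ket{0}^{\otimes a}-A_j/\alpha}\le\varepsilon_2$, and it is not the paper's definition. Under that hypothesis your argument does give exactly $\alpha\varepsilon_1+\alpha\beta\varepsilon_2$.

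The honest conclusion of your proof is therefore one of the following:
\begin{itemize}
\item the theorem holds with error $\alpha\varepsilon_1+\beta\varepsilon_2$ for all $\alpha>0$; or
\item it holds as stated under the added assumption $\alpha\ge 1$.
\end{itemize}
In the paper's own uses (Pauli primitives, with $\alpha=1$ and $\varepsilon_2=0$) the discrepancy is harmless. Still, you should flag it rather than assert that the general case can be verified.
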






\section{Motivating Example: a Simple Lindbladian Simulation Compilation}


To illustrate the compilation challenges and our contributions, we consider simulating  simple Markovian dynamics of an open quantum system.
Unlike closed quantum systems, whose evolution is described by \emph{reversible} unitary operators, Markovian open-system dynamics are governed by Lindbladian generators that induce quantum channels (completely positive trace-preserving maps) describing \emph{irreversible} non-unitary processes such as dissipation and decoherence. 
Our goal is not to analytically solve the resulting differential equation, but rather to compile the approximation of this dynamical evolution into executable quantum circuits. 

\subsection{Problem Setup}


Consider the following textbook example that describes the thermal relaxation of a single qubit interacting with an environment~\cite{Bre07}. This is a single-qubit open system whose dynamics are described by two jump operators $L_1 = \sqrt{\gamma(N+1)}\ket{1}\bra{0}$ and $ L_2 = \sqrt{\gamma N}\ket{0}\bra{1}$ for some $\gamma, N >0$.
Our goal is to compile a simulation of the dynamical evolution $\mathcal{E}_t = e^{t\mathcal{L}}$ into executable quantum circuits.

Although this example only involves a single qubit, the same compilation pattern appears in larger Lindbladian systems with more terms, and each jump operator could expand into large numbers of Pauli terms.



\subsection{Compilation Process}

\subsubsection{Input form}
Our compilation pipeline begins by translating the input Lindbladian generator into our intermediate representation \ChannelIR.
We require the input Lindbladian to be provided as a Hamiltonian $H$ together with a set of jump operators $L_j$, both specified as  \emph{Pauli sums} or explicit matrices. 

For the input operators in matrix form, the compiler will decompose them into Pauli sums as
\[
    H = \sum_k \alpha_k V_{0k},  \quad L_j = \sum_{k=0}^{q-1} \beta_{jk} V_{jk},
\]
where $V_{jk}$ is an $n$-fold tensor product of Pauli matrices $\{I,X,Y,Z\}$ with a phase $e^{\ii \theta}$, $\beta_{jk}$ is a non-zero complex number. 
For the example above, the jump operators can be represented as
\[
 L_1 = \frac{\sqrt{\gamma(N+1)}}{2} X +\frac{\sqrt{\gamma(N+1)}}{2} \rbra*{\ii Y},  \quad L_2 = \frac{\sqrt{\gamma N} }{2} X + \frac{\sqrt{\gamma N} }{2} \rbra*{-\ii Y}.
\]


\subsubsection{\LindFront{}: from Lindbladian generators to superoperators represented in \ChannelIR}
The Lindbladian generator specifies a continuous-time evolution, while quantum circuits implement \emph{discrete} operations. The compiler therefore applies a lowering pass that approximates the evolution over a small timestep $\delta$.

Our framework supports multiple lowering strategies. One approach is the first-order expansion method~\cite{cleve2017lindblad}. 
\LindFront{} computes the Kraus form of a superoperator that approximates the short-time evolution by its first-order expansion as $e^{\delta\mathcal{L}}\approx \mathcal{I} +\delta\mathcal{L}$. 
Our framework also supports higher-order expansion studied in~\cite{Li2023simulating}.

The Kraus operators of the superoperator are therefore computed as follows:
\begin{equation}
A_0 = I - \frac{\delta}{2}\sum_{j=1}^{2}L_j^{\dagger}L_j, \quad A_j = \sqrt{\delta} L_j, \ j= 1,2. 
\end{equation}

\subsubsection{\ChannelIR{} simplifications}
\ChannelIR{} provides a set of rewrite rules for simplifying Kraus operators and Pauli-sum expressions for matrices. For the example above, the result after rewriting is a superoperator represented in Kraus form within \ChannelIR:
\begin{figure}[ht]
\vspace{-1em}
\[
\begin{array}{l}
\texttt{channel}\ \mathcal{E}_\delta \ \{ 
\quad \ A_0 \;=\; \left(1-\frac{\delta\gamma}{4}(2N+1)\right) I
-\frac{\delta\gamma}{4} Z,   
\quad \ A_1 \;=\; 
    \sqrt{\tfrac{\delta\gamma(N+1)}{2}} \cdot X
    \;+\;
    \sqrt{\tfrac{\delta\gamma(N+1)}{2}} \cdot (\ii Y)  \\[2pt]
\quad \ A_2 \;=\;
    \sqrt{\tfrac{\delta\gamma N}{2}} \cdot X
    \;+\;
    \sqrt{\tfrac{\delta\gamma N}{2}} \cdot (-\ii Y) 
\}
\end{array}
\]
\vspace{-1em}
\caption{The resulting channel represented in Kraus form within \ChannelIR.}
\label{fig:lindir-pseudocode}
\end{figure}



\subsubsection{Circuit implementation: from \ChannelIR{} to quantum circuits}

The next stage lowers the Kraus form representation of the superoperator stored in \ChannelIR{} into executable quantum circuits. Given Kraus operators $A_j$, the compiler first synthesizes block-encodings (see Definition~\ref{def:block-encoding}) $\mathcal{B}[A_j]$ via the standard linear-combination-of-unitaries technique for each operator and then constructs a channel implementation using a \emph{channel} linear combination of unitaries (channel-LCU)~\cite{cleve2017lindblad,Li2023simulating}.

In our motivating example, the Kraus operators obtained in Step~2 are $A_0, A_1, A_2$, where $A_1$ and $A_2$ come from the two jump operators, and $A_0$ is the no-jump term. After compiling each $A_j$ into a block-encoding, the channel-LCU construction combines these three implementations into a single circuit for the short-time evolution channel approximating $e^{\delta \mathcal{L}}$.

\begin{figure}[h]
\centering
\begin{subfigure}{0.35\linewidth}
    \centering
    \[ \resizebox{0.6\linewidth}{!}{
    \begin{quantikz}[row sep=0.25cm, column sep=0.4cm]
            & \gate[wires=1]{H} 
            & \octrl{1}
            & \ctrl{1}
            & \gate[wires=1]{H} 
            & \qw \\
            & \qw 
            & \gate{X}
            & \gate{\ii Y}
            & \qw 
            & \qw \\
    \end{quantikz}}
    \]
\\ 
 \[ \resizebox{0.6\linewidth}{!}{
    \begin{quantikz}[row sep=0.25cm, column sep=0.4cm]
            & \gate[wires=1]{H} 
            & 
            & \ctrl{1}
            & \gate[wires=1]{H} 
            & \qw \\
            & \qw 
            & \gate{X}
            &  \gate{Z} 
            & \qw 
            & \qw \\
    \end{quantikz}}
    \]
    \caption{The block-encoding circuit $\mathcal{B}[A_1]$ via LCU (top) and its optimized  form (bottom). $H = \frac{1}{\sqrt{2}}(X+Z)$ is the Hadamard gate.}
    \label{fig:lcu-block-encoding-motivating}
\end{subfigure}
\hfill
\begin{subfigure}{0.6\linewidth}
\resizebox{0.85\textwidth}{!}{
    \begin{quantikz}
\lstick{$\ket{0}$}  &\gate[2]{\PREPARE}&\octrl{2} & \octrl{2} & \ctrl{2} &  \\
\lstick{$\ket{0}$}  && \octrl{1}& \ctrl{1} &\octrl{1}&   \\
\lstick{$\ket{0}$}    & & \gate[2, style={fill=orange!20}]{\mathcal{B}[A_0]} & \gate[2, style={fill=orange!20}]{\mathcal{B}[A_1]} & \gate[2, style={fill=orange!20}]{\mathcal{B}[A_2]} &  \\
&  &&&& \\
\end{quantikz}}
\caption{The circuit approximating $e^{\delta\mathcal{L}}$ via channel-LCU.}
\end{subfigure}
\end{figure}


However, the naive channel-LCU implementation introduces a large number of multi-controlled operations even for this simple example. 
These structures dominate the cost of the circuit, motivating our next step which introduces structure-aware optimizations to significantly reduce this overhead.

\paragraph{Optimization: optimizing \SELECT{} in block-encodings/LCUs.}
When constructing a \SELECT{} oracle for a linear combination of 
$N$ Pauli strings, the straightforward approach involves implementing 
$N$ multi-controlled gates, each with $\log N$ control qubits. 
For instance, when considering the block-encoding of the Kraus operator $A_1$, 
the naive \SELECT{} oracle is a two-branch multiplexor that applies $X$ and $iY$ on each branch. Since we have $iY = ZX$, they share a common factor $X$. 
We may factor out this common $X$ operator and rewrite the circuit so that only the remaining $Z$ operates depending on the control. 
As shown in Figure~\ref{fig:lcu-block-encoding-motivating}, these two circuits are equivalent, but the rewritten circuit reduces the control overhead. This simple example foreshadows the more general Pauli-structure-aware optimization developed later.

We summarize the above compilation workflow in~\Cref{fig:workflow}. The details of each step are introduced in subsequent sections.
\begin{figure}[ht]
\centering
\begin{tikzpicture}[
    >=Latex,
    node distance=4mm and 4mm,
    font=\small,
    box/.style={
        draw,
        rounded corners=2pt,
        thick,
        minimum height=8mm,
        minimum width=2.45cm,
        align=center,
        inner sep=2pt
    },
    lind/.style={box, fill=blue!12},
    ir/.style={box, fill=orange!18},
    circ/.style={box, fill=green!12},
    arr/.style={->, thick}
]

\node[lind] (input) {Input\\Lindbladian $\mathcal{L}$};
\node[ir, right=of input] (irnode) {\LindFront{}\\
generate IR};
\node[lind, right=of irnode] (lower) {\ChannelIR\\
simplifications};
\node[circ, right=of lower] (syn) {Circuit\\synthesis};
\node[circ, right=of syn] (opt) {Circuit\\optimization};

\draw[arr] (input) -- (irnode);
\draw[arr] (irnode) -- (lower);
\draw[arr] (lower) -- (syn);
\draw[arr] (syn) -- (opt);

\node[above=1mm of input, font=\scriptsize] {domain input};
\node[above=1mm of irnode, font=\scriptsize] {IR level};
\node[above=1mm of lower, font=\scriptsize] {algorithmic lowering};
\node[above=1mm of syn, font=\scriptsize] {circuit level};
\node[above=1mm of opt, font=\scriptsize] {circuit level};

\end{tikzpicture}
\caption{
Compilation workflow for the motivating example.
}
\label{fig:workflow}
\end{figure}


\section{\ChannelIR}
In this section, we propose \ChannelIR, an intermediate representation specifically designed for quantum channels.


\subsection{Motivation and Design Goals}
Quantum channels provide a natural abstraction for describing open-system quantum dynamics. 
However, existing quantum compilation infrastructures are primarily organized around unitary circuit representations, which assume reversible closed-system evolution. 
When algorithms are expressed in terms of quantum channels or generators (such as Lindbladians), they must typically be manually translated into circuit-level constructions before compilation can proceed.
This translation prematurely fixes a circuit implementation for channels and obscures the algebraic structure of the original channels.

\ChannelIR{} is designed to bridge this gap by introducing an intermediate representation where quantum channels are treated as first-class compilation objects. The design of \ChannelIR{} is guided by the following goals.

\textit{Preserve Channel-Level Structure}
Many quantum algorithms involving open systems are naturally expressed as quantum channels, rather than as sequences of unitary gates. For example, Lindbladian simulation produces channels whose Kraus operators describe dissipative processes. Representing these transformations directly at the IR level preserves their mathematical structure and avoids prematurely expanding them into large circuit constructions.

\ChannelIR{} represents quantum channels explicitly in a Kraus form, enabling the compiler to manipulate channels symbolically before lowering them into circuit implementations.

\textit{Enable Algebraic Reasoning Prior to Circuit Synthesis.}
Kraus representations of a quantum channel admit multiple equivalent decompositions, and algebraic transformations can often simplify them substantially. 
However, once a channel has been lowered into a quantum circuit representation, these opportunities are difficult to recover.

\ChannelIR{} retains the operator-level representation of Kraus operators as matrix sums of Pauli strings, allowing compiler passes to apply algebraic rewrites such as
\begin{itemize}
    \item eliminating redundant Kraus operators,
    \item merging equivalent operator terms,
    \item removing irrelevant global phases, and
    \item applying unitary transformations for Kraus terms.
\end{itemize}
These transformations can reduce the number of Kraus operators in the representation, thereby reducing circuit cost.

\paragraph{Preserve Linear Structure for Block-Encoding-Based Synthesis.}

For implementing non-unitary operations, current approaches rely on the channel-LCU method based on block-encodings. Therefore, Kraus operators should be in a form that allows efficient constructions of block-encodings.

\ChannelIR{} explicitly represents Kraus operators as linear combinations of Pauli strings, which are directly implementable unitary operators. This representation aligns naturally with block-encoding and LCU synthesis methods, allowing the compiler to translate operator-level expressions into circuits systematically.

\subsection{Core Syntax}

We present the core syntax of \ChannelIR{} in \Cref{fig:channelir-grammar}.
















\begin{figure}[ht]
\vspace{-1em}
\centering
\setlength{\arraycolsep}{2pt} 
\[
\begin{array}{rcl@{\qquad}rcl}

\mathsf{Scalar} \; \beta  &\in& \mathbb{C} \setminus \cbra{0}
&
\mathsf{Phase} \; \theta &\in& \interval[open right]{0}{2\pi}
\\

\mathsf{PauliAtom} \; g_i & ::= & I \mid X \mid Y \mid Z
&
\mathsf{PauliString} \; G & ::= &  g_1g_2\cdots g_n
\\

\mathsf{PauliUnitary} \; v  & ::= & e^{\ii\theta} G
&
\mathsf{Primitive} \; P  & ::= &  v \mid  \texttt{blockenc}(q)
\\

\mathsf{Kraus} \; K  & ::= & \sum_j \beta_j P_j
&
\mathsf{Channel} \; \mathcal{C} & ::= & \texttt{channel} \{ K_1 ; \dots ; K_m \}
\end{array}
\]
\vspace{-1em}
\caption{Core syntax of \ChannelIR.}
\label{fig:channelir-grammar}
\end{figure}

For a superoperator in Kraus form $\mathcal{C}(\rho) = \sum_{j=1}^m K_j \rho K_j^{\dagger}$,
\ChannelIR{} represent it explicitly as the collection of Kraus operators $\texttt{channel}\{K_j\}$.

Each Kraus operator $K$ is expressed in a linear combination form as
$K = \sum_j \beta_j P_j$,
where $\beta_j$s are non-zero complex coefficients, and $P_j$s are primitive operators.
During the compilation, this linear combination form is kept via a rewrite rule (see rule \textbf{PS2} in~\Cref{sec:channelir-rewrite}): if a term $\beta_jP_j$ becomes $0\cdot P_j$ in the process, the term is removed from the sum.

Primitive operators represent the lowest-level operator constructs in the IR. \ChannelIR{} currently supports two kinds of primitives:
\begin{itemize}
    \item $\mathsf{PauliUnitaries}$, which are tensor products of $\mathsf{PauliAtoms}$  ($I,X,Y,Z$) with a global phase factor $e^{\ii \theta}$, and
    \item Block-encoding $\texttt{blockenc}(q)$, which denotes a user-specified implementation of a block-encoding of matrix $q$.
\end{itemize}

The term $\texttt{blockenc}(q)$ provides an alternative choice for users to specify block-encodings according to their own implementations. Treating block-encodings as primitives allows \ChannelIR{} to incorporate externally provided implementations directly, without requiring them to be expanded into Pauli decompositions.

\paragraph{Phase and Coefficient Flexibility}
\ChannelIR{} allows both complex coefficients in matrix sums and complex phase on primitive operators. In other words, phase factors can appear either in the coefficient $\beta_j$, or in the primitive operator $P_j$, or in both.

This redundancy is intentional. Different synthesis strategies may place phase factors in different locations of the generated circuit: within primitive block encodings, or in state-preparation unitaries used in LCU constructions. 
Allowing phases to appear either in coefficients or primitives therefore avoids prematurely committing to a particular implementation strategy, enabling later compilation to choose appropriate phase placement according to the target synthesis strategy.



\subsection{Dimension Consistency}

Beyond the above syntax, \ChannelIR{} imposes a small collection of structural
well-formedness conditions. These conditions ensure that every \ChannelIR{} term
denotes a dimensionally consistent superoperator and that subsequent rewrite
and lowering passes operate on a valid intermediate representation.

\paragraph{Dimensions of primitives.}
Each primitive operator $P$ in \ChannelIR{} is associated with the number of qubits that it operates on. We write $P : n$
to indicate that $P$ acts on an $n$-qubit Hilbert space.
For Pauli unitaries, this is determined by the length of the Pauli string.
In other words, 
if $P = e^{\ii \theta} G$ for some $\mathsf{PauliString}$ $G = g_1 g_2\dots g_n$, then $P:n$.
For a primitive of the form $\mathsf{blockenc}(q)$, the arity is determined by
the number of qubits that the encoded operator $q$ operates on.

\textit{Dimension of Kraus operators.}
A Kraus expression $K = \sum_j \beta_j P_j$
is well-typed only if all primitives $P_j$ appearing in the sum satisfy $P_j:n$. In that case, we write $K : n$.
This condition guarantees that linear combinations are formed only between
operators acting on the same Hilbert space.

\textit{Typing of channels.}
A channel term $C = \mathsf{channel}\{K_1,\dots,K_m\}$
is well-typed only if every Kraus operator $K_i$ has the same arity $n$.
We then write $C : n \to n$,
indicating that $C$ denotes a superoperator on $n$ qubits.


\textit{Zero and empty expressions.}
We allow the zero Kraus operator and the empty Kraus sum as intermediate forms
during rewriting. These expressions remain well-typed as long as their ambient
arity is clear from context. This convention is useful for local simplifications
such as zero-term elimination and zero-Kraus removal, while preserving a simple
algebraic presentation of the rewrite system.

\textit{Role in compilation.}
The above conditions serve two purposes. First, they ensure that the
denotational semantics of \ChannelIR{}  is defined only on dimensionally consistent terms. 
Second, they provide a basic invariant during compilation: rewrites and lowering should preserve the dimension of the system.

\subsection{Semantics}

\ChannelIR{}  represents superoperators in Kraus form, and we define its semantics as follows.
\paragraph{Pauli strings} 
For $G\in \mathsf{PauliString}$ with $G = g_1 g_2 \dots g_n$, its semantics 
is defined as  $\sem{G} = \otimes_{j=1}^n g_j$,
i.e., $\sem{G}$ is an $n$-qubit Pauli gate.

\textit{Pauli unitaries.}
For $u\in \mathsf{PauliUnitary}$ with $u = e^{\ii \theta}G$, we define its semantics as $\sem{u} = e^{\ii \theta}\sem{G}$.

\textit{Primitive operators.}
 For $P\in \mathsf{Primitive}$, $P$ is either a Pauli unitary or of the form 
 $\texttt{blockEnc}(q)$. 
 We define the semantics of $\texttt{blockenc}(q)$ as $\sem{\texttt{blockenc}(q)} = q$.
 
\textit{Kraus operators.} 
For $K\in \mathsf{Kraus}$, it is represented as a linear combination of primitives as
$K = \sum_j \beta_j P_j$
with $\beta_j \ne0$.
The semantics is defined as $\sem{K} = \sum_j \beta_j \sem{P_j}$.

\textit{Channels.}
For $\mathcal{C}\in \mathsf{Channel}$ with $\mathcal{C} = \texttt{channel}\{K_1,\dots,K_m\}$, its semantics is a completely positive linear map satisfying 
$\llbracket \mathcal{C} \rrbracket (\rho)=\sum_{i=1}^m \sem{K_i} \rho \sem{K_i}^\dagger$ for any $\rho$.


\begin{theorem}\label{thm:well-formed-IR-are-channels}
    If $\mathcal{C}:n\to n$, then $\sem{\mathcal{C}}$ is a (completely positive) superoperator from the set of $n$-qubit density matrices to the set of $n$-qubit density matrices.
\end{theorem}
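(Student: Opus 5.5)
The plan is to argue by induction on the typing derivation of $\mathcal{C}$: first show that every well-typed primitive and Kraus term denotes a $2^n\times 2^n$ complex matrix, and then show that any map in Kraus form built from such matrices is a completely positive linear map on $n$-qubit operators. No earlier result beyond the typing rules and the semantic clauses is needed.

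\emph{Step 1 (dimensions).} For a Pauli unitary $u=e^{\ii\theta}G$ with $G=g_1\cdots g_n$, we have $u:n$. Its semantics is $\sem{u}=e^{\ii\theta}\bigotimes_{j=1}^n g_j$, a tensor product of $n$ matrices of size $2\times 2$, hence of size $2^n\times 2^n$. For $\texttt{blockenc}(q):n$, the typing rule says $q$ acts on $n$ qubits, so $\sem{\texttt{blockenc}(q)}=q$ is again $2^n\times 2^n$. For a Kraus term $K=\sum_j\beta_jP_j$ with $K:n$, every $P_j$ satisfies $P_j:n$. Therefore $\sem{K}=\sum_j\beta_j\sem{P_j}$ is a well-defined complex linear combination of $2^n\times 2^n$ matrices. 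The zero or empty sum is handled by the convention that its semantics is the $2^n\times 2^n$ zero matrix, with the ambient arity taken from context. Finally, $\mathcal{C}:n\to n$ forces every $K_i:n$, so each $\sem{K_i}$ is $2^n\times 2^n$. Hence $\sem{\mathcal{C}}(\rho)=\sum_i\sem{K_i}\rho\sem{K_i}^\dagger$ is defined for every $2^n\times 2^n$ matrix $\rho$, lands in the same space, and is linear in $\rho$.

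\emph{Step 2 (complete positivity).} Write $K_i$ for $\sem{K_i}$. Let $\mathcal{H}_{aux}$ be arbitrary and let $\sigma\succeq 0$ act on $\mathcal{H}\otimes\mathcal{H}_{aux}$. Then
\[
(\sem{\mathcal{C}}\otimes I)(\sigma)=\sum_i (K_i\otimes I)\,\sigma\,(K_i\otimes I)^\dagger .
\]
For any vector $v$ we have $v^\dagger (K_i\otimes I)\sigma(K_i\otimes I)^\dagger v = w_i^\dagger\sigma w_i\ge 0$, where $w_i=(K_i\otimes I)^\dagger v$. Each summand is therefore positive semidefinite, and so is the sum. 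Taking $\mathcal{H}_{aux}$ trivial gives ordinary positivity. In particular, every density matrix $\rho$ is sent to a positive semidefinite $n$-qubit operator.

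\emph{Main obstacle.} The only delicate point is the literal phrase ``to the set of $n$-qubit density matrices.'' The grammar allows arbitrary nonzero coefficients $\beta_j$, so $\sum_iK_i^\dagger K_i=I$ is not enforced, and $\tr(\sem{\mathcal{C}}(\rho))=\tr\bigl(\sum_iK_i^\dagger K_i\,\rho\bigr)$ need not equal $1$. I would therefore state precisely what the induction establishes. In general, $\sem{\mathcal{C}}$ maps $n$-qubit density matrices to positive semidefinite $n$-qubit operators, which is exactly the paper's notion of a ``(completely positive) superoperator'' noted in the footnote. The output is a genuine density matrix precisely when $\sum_iK_i^\dagger K_i=I$, for then the trace is preserved. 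Under this reading the proof is just Steps 1 and 2.
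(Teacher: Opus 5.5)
Your proposal is correct and follows the paper's route: the paper's proof is only the one-line remark that the claim follows by induction on the structure of $\mathcal{C}$, and your Steps 1 and 2 fill in the dimension check and the complete-positivity argument that this remark leaves implicit. Your caveat is also right. The grammar does not enforce $\sum_i \sem{K_i}^\dagger\sem{K_i}=I$ (e.g.\ $\texttt{channel}\{2I\}$ sends $\rho$ to $4\rho$), so what actually holds is a completely positive map into positive semidefinite $n$-qubit operators, matching the paper's footnote that \ChannelIR{} represents completely positive superoperators rather than only channels.
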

\begin{proof}
    This is direct by induction on the structure of $\mathcal{C}$.
\end{proof}




\subsection{Algebraic Rewrite Rules}
\label{sec:channelir-rewrite}

\ChannelIR{}  supports a collection of algebraic rewrite rules that preserve the 
semantics of the represented quantum channel while simplifying its internal
structure. These rewrites enable the compiler to reduce the number of Kraus
operators and simplify the Pauli-sum expressions appearing in Kraus operators
prior to circuit synthesis.

We write $C \Rightarrow_R C'$ to denote that channel $C$ can be rewritten into
$C'$ using the set of rules presented in~\Cref{fig:rewrite-rules}.

\newcommand{\channel}[1]{\mathsf{channel}\{#1\}}
\newcommand{\epssum}{\epsilon}

\begin{figure}[h]
\centering
\footnotesize
\begin{tabular}{ll}
\toprule
\multicolumn{2}{l}{\textbf{Kraus-operator simplifications}} \\[3pt]
\midrule
\textbf{(PS1)} \textit{Term merging}
&
$\beta_1 P + \beta_2 P \Rightarrow (\beta_1+\beta_2)P$
\\[4pt]

\textbf{(PS2)} \textit{Zero-term elimination}
&
$0 \cdot P \Rightarrow \epssum$
\\
&
where $\epssum$ denotes the empty sum with
$\llbracket \epssum \rrbracket = 0$
\\[8pt]

\midrule
\multicolumn{2}{l}{\textbf{Channel-level Kraus simplifications}} \\[3pt]
\midrule

\textbf{(K1)} \textit{Zero Kraus elimination}
&
$\channel{K_1,\ldots,K_{j-1},0,K_{j+1},\ldots,K_m}
\Rightarrow
\channel{K_1,\ldots,K_{j-1},K_{j+1},\ldots,K_m}$
\\[6pt]

\textbf{(K2)} \textit{Global phase elimination}
&
$\channel{\ldots,e^{i\theta}K,\ldots}
\Rightarrow
\channel{\ldots,K,\ldots}$
\\[6pt]

\textbf{(C1)} \textit{Kraus permutation}
&
for any permutation $\pi$ \\
&
$\channel{K_1,\ldots,K_m}
\Rightarrow
\channel{K_{\pi(1)},\ldots,K_{\pi(m)}}$
\\[6pt]

\textbf{(C2)} \textit{Kraus unitary transform}
&
if $K'_j = \sum_k u_{jk}K_k$ where $U = (u_{jk}) \in \mathbb{C}^{m\times m}$ is unitary
\\
&
$\channel{K_1,\ldots,K_m}
\Rightarrow
\channel{K'_1,\ldots,K'_m}$
\\[6pt]

\textbf{(C2')} \textit{Two-Kraus unitary transform}
&
if $|a|^2 + |b|^2 = 1$
\\
&
$\channel{K_1,K_2,K_3,\ldots,K_m}
\Rightarrow
\channel{aK_1+bK_2,-b^*K_1+a^*K_2,K_3,\ldots,K_m}$
\\[6pt]

\textbf{(C3)} \textit{Kraus merging}
&
$\channel{a_1K,a_2K,\ldots,a_jK,K_1,\ldots,K_m}
\Rightarrow
\channel{\sqrt{\sum_{\ell=1}^j |a_\ell|^2}\,K,K_1,\ldots,K_m}$
\\[6pt]

\textbf{(C3')} \textit{Two-Kraus merging}
&
$\channel{a_1K,a_2K,\ldots}
\Rightarrow
\channel{\sqrt{|a_1|^2+|a_2|^2}\,K,\ldots}$\\
\bottomrule
\end{tabular}
\vspace{-0.5em}
\caption{Rewrite rules for \ChannelIR.}
\label{fig:rewrite-rules}
\end{figure}

The rule \textbf{C3} (or \textbf{C3'}) is actually a special case of the rule \textbf{C2} (or \textbf{C2'}).

\paragraph{Soundness.} We can show that rewrite rules do not change
the semantics.

\begin{theorem}\label{thm:soundness-of-rewrite-rules}
For the rewrite rules defined above and any \ChannelIR{} program $C$,
if $C \Rightarrow_R C'$, then $\sem{C} = \sem{C'}$.
\end{theorem}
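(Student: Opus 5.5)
The argument is a case analysis over the rules of \Cref{fig:rewrite-rules}, after one structural reduction. Rules \textbf{PS1} and \textbf{PS2} act on a Kraus subexpression rather than on a whole channel. So I would first show a congruence property: if a Kraus expression $K$ rewrites to $K'$ with $\sem{K}=\sem{K'}$, then replacing $K$ by $K'$ inside $\channel{K_1,\ldots,K_m}$ leaves $\sem{\cdot}$ unchanged. This follows because the channel semantics $\rho\mapsto\sum_i \sem{K_i}\rho\sem{K_i}^\dagger$ depends only on the matrices $\sem{K_i}$. The same reasoning handles rewriting a single summand inside a Kraus sum, since $\sem{\sum_j\beta_jP_j}=\sum_j\beta_j\sem{P_j}$ is compositional. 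If $\Rightarrow_R$ means the reflexive-transitive closure, soundness for multi-step rewrites follows by induction on the number of steps. It therefore suffices to check each rule once.

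\textbf{Kraus-level rules.}
\begin{itemize}
\item \textbf{PS1}: $\beta_1\sem{P}+\beta_2\sem{P}=(\beta_1+\beta_2)\sem{P}$ by linearity.
\item \textbf{PS2}: $0\cdot\sem{P}=0=\sem{\epsilon}$.
\end{itemize}

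\textbf{Channel-level rules.}
\begin{itemize}
\item \textbf{K1}: a zero Kraus operator contributes $0\rho0^\dagger=0$ to the sum.
\item \textbf{K2}: $e^{i\theta}K\rho(e^{i\theta}K)^\dagger=e^{i\theta}e^{-i\theta}K\rho K^\dagger=K\rho K^\dagger$.
\item \textbf{C1}: finite sums are invariant under permutation of their terms.
\end{itemize}

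The central case is \textbf{C2}, the unitary freedom in the Kraus representation. Writing $K'_j=\sum_k u_{jk}K_k$, we get
\[
\sum_j K'_j\rho K_j'^\dagger=\sum_{k,l}\Big(\sum_j u_{jk}u_{jl}^*\Big)K_k\rho K_l^\dagger .
\]
Since $U$ is unitary, $U^\dagger U=I$, so $\sum_j u_{jl}^*u_{jk}=\delta_{lk}$. The double sum therefore collapses to $\sum_k K_k\rho K_k^\dagger$.

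The remaining rules reduce to \textbf{C2}:
\begin{itemize}
\item \textbf{C2'} is \textbf{C2} with $U=\begin{pmatrix}a&b\\-b^*&a^*\end{pmatrix}\oplus I_{m-2}$. This matrix is unitary exactly when $|a|^2+|b|^2=1$, which one checks directly.
\item \textbf{C3} (and its special case \textbf{C3'}) is most easily verified directly:
\[
\sum_\ell a_\ell K\rho (a_\ell K)^\dagger=\Big(\sum_\ell|a_\ell|^2\Big)K\rho K^\dagger=\Big(\sqrt{\textstyle\sum_\ell|a_\ell|^2}\,K\Big)\rho\Big(\sqrt{\textstyle\sum_\ell|a_\ell|^2}\,K\Big)^\dagger .
\]
Alternatively, one can use the remark that it is a special case of \textbf{C2} followed by \textbf{K1}: pick a unitary whose first row is $(a_1^*,\ldots,a_j^*)/\sqrt{\sum|a_\ell|^2}$. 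This sends the tuple to $(\sqrt{\sum|a_\ell|^2}K,0,\ldots,0)$, and the zeros are then removed by \textbf{K1}.
\end{itemize}

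The only real obstacle is bookkeeping, not mathematics. First, the congruence and closure step must be stated cleanly so that local Kraus-level rewrites lift to channel equalities. Second, the \textbf{C2} index computation must use the correct orthogonality relation: the columns of $U$ are orthonormal, which is $U^\dagger U=I$, not $UU^\dagger=I$. For a finite square matrix the two are equivalent, so either can be used.
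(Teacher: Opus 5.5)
Your proposal is correct and matches the paper's proof: both check the rules one by one, handle \textbf{C2} via $\sum_j u_{jk}\overline{u_{j\ell}}=\delta_{k\ell}$, treat \textbf{C2'} as the $2\times 2$ instance of \textbf{C2}, and verify \textbf{C3} by direct computation. The paper leaves the congruence and multi-step-closure argument implicit, so your explicit version of it adds a bit of rigor without changing the approach.
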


\begin{proof}[Proof sketch]
This follows by induction on the rules and the unitary freedom in the operator-sum representation of quantum channels~\cite[Theorem 8.2]{NC10}.
\end{proof}


\subsection{Simplification Power of the Rewrite System}
The rewrite rules introduced above provide more than semantic preservation: they enable simplification of Kraus representations before circuit synthesis. Specifically, they allow us to:
\begin{itemize}
    \item reduce the number of Kraus operators representing a channel and
    \item simplify the internal Pauli-sum structure of individual Kraus operators.
\end{itemize}
We illustrate this with theory and examples below.


\begin{proposition}[Kraus-rank minimization by rewriting]\label{prop:kraus-rank-minimization}
Let $C = \mathrm{channel}\{K_1,\dots,K_n\}$
be a well-typed \ChannelIR{} program, and let $r$ be the minimal number of Kraus
operators among all Kraus representations of $\llbracket C \rrbracket$.
Then there exists a rewrite sequence $C \Rightarrow_R C'$
such that $C'$ has exactly $r$ nonzero Kraus operators.

Moreover, this sequence can be either chosen to use one application of rule \textbf{C3}, followed by one application of \textbf{C2}, and at most $n-r$ applications of rule \textbf{K1}; 
 or it can be chosen as one application of rule \textbf{C3}, followed by $O(n)$ applications of rule \textbf{C2'} and at most $n-r$ applications of rule \textbf{K1}.
In both cases, the number of applications is $O(n)$.
\end{proposition}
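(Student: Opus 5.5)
The proof rests on the unitary freedom of Kraus representations \cite[Theorem 8.2]{NC10}, together with the fact that the minimal Kraus number $r$ equals the rank of the Gram/Choi data. Write $\sem{K_i}$ as a vector $k_i=\mathrm{vec}(\sem{K_i})\in\mathbb{C}^{4^n}$, where the dimension is that of the $n$-qubit space. The Choi matrix of $\sem{C}$ is $J=\sum_i k_i k_i^\dagger$, and the minimal number of Kraus operators is $r=\operatorname{rank} J=\dim\operatorname{span}\{k_1,\dots,k_n\}$. Our goal is therefore to exhibit an $n\times n$ unitary $U$ such that the transformed family $K'_j=\sum_k u_{jk}K_k$ has exactly $r$ nonzero members. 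Rule \textbf{C2} applies this transform, and rule \textbf{K1} then removes the zeros, so there are at most $n-r$ of them.

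To construct $U$, form the matrix $M=[k_1,\dots,k_n]$, whose columns are the vectorized Kraus operators. Since $\operatorname{rank} M=r$, its kernel $\ker M\subseteq\mathbb{C}^n$ has dimension $n-r$. Choose an orthonormal basis of $\mathbb{C}^n$ whose last $n-r$ vectors span $\ker M$, and take the rows of $U$ to be the complex conjugates of these basis vectors. Then $K'_j=\sum_k u_{jk}K_k$ corresponds to $M\bar u_j^{*}$ with $\bar u_j$ in the kernel, so $K'_j=0$ for $j>r$. The first $r$ operators must all be nonzero, because $\sum_j K'_j\rho K_j'^\dagger=\sem{C}(\rho)$ needs at least $r$ nonzero Kraus operators. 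The role of the single preliminary \textbf{C3} step is to merge Kraus operators that are scalar multiples of a common $K$. This step is optional, since \textbf{C3} may be applied trivially with $j=1$, but it matches the stated rule sequence and can shrink $n$ first; the same argument then applies to the reduced list of length $n'\le n$. The bound on \textbf{K1} steps becomes $n'-r\le n-r$.

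For the \textbf{C2'} variant, factor $U$ into $O(n)$ two-level unitaries that act only on adjacent Kraus slots (using \textbf{C1} permutations if the rule must act on the first two slots). A general $n\times n$ unitary needs $O(n^2)$ Givens rotations, which is more than the claimed bound. This is the main obstacle, and it is avoided by not implementing all of $U$. Instead, process the list sequentially, maintaining a current independent set. At step $i$, if $k_i$ lies in the span of the processed nonzero vectors, we zero it with a bounded number of two-Kraus rotations. A single \textbf{C2'} rotation between $K_i$ and one earlier $K_\ell$ zeros $K_i$ whenever $K_i\propto K_\ell$. The general dependent case is harder, and there I would first try a Gram--Schmidt-style elimination. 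In that approach each new $K_i$ is rotated against a running accumulator, and rotations are counted so that each Kraus slot participates in $O(1)$ rotations amortized, since every rotation either zeros a slot or finalizes one.

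If this amortization fails, the fallback is to observe that only the first $r$ rows of $U$ matter up to a unitary on the zeroed block. A QR-type reduction on the $r\times n$ coefficient data $R$ from $M=QR$ can then be performed with $O(n)$ rotations per column. In the regime where $r$ is treated as bounded, this gives $O(n)$ rotations in total.
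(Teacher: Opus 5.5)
Your proofs of the existence claim and of the \textbf{C2} variant are correct. They are a constructive form of the paper's argument: the paper pads a minimal family $L_1,\dots,L_r$ with $n-r$ zeros and invokes unitary freedom to obtain $U$, whereas you build $U$ directly from $\ker M$. There is one slip to fix. Since $\mathrm{vec}(K'_j)=M\,(u_{j1},\dots,u_{jn})^{\intercal}$, row $j$ of $U$ must be $w_j^{\intercal}$, not its conjugate; then $U=W^{\intercal}$, which is unitary whenever $W=[w_1,\dots,w_n]$ is. With conjugated rows you get $M\bar w_j$, which vanishes only if $w_j\in\ker\bar M$.

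The genuine gap is the \textbf{C2'} count, and it cannot be filled, because the $O(n)$ bound is false in general.

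Your amortization (``every rotation either zeros a slot or finalizes one'') fails at once. A \textbf{C2'} step applied to two linearly independent operators returns two linearly independent operators, so it zeros a slot only if the pair is already proportional.

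A dimension count rules out every scheme:
\begin{itemize}
\item All rules preserve $\mathrm{span}\{\sem{K_i}\}$, so after fixing a basis of that span the family is a coefficient matrix $C\in\mathbb{C}^{n\times r}$.
\item For generic $C$ no two rows are proportional, so the initial \textbf{C3} is trivial: it only multiplies one row by a phase.
\item Conjugating a two-level $\mathrm{SU}(2)$ matrix by a diagonal or permutation matrix gives another two-level $\mathrm{SU}(2)$ matrix. Hence \textbf{K2} and \textbf{C1} steps can be moved past all rotations, and \textbf{K1} steps can be deferred.
\item A successful sequence with $k$ applications of \textbf{C2'} therefore writes $C=V_1^{\dagger}\cdots V_k^{\dagger}Z$. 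Here each $V_t$ is a two-level $\mathrm{SU}(2)$ matrix, and $Z$ has $n-r$ zero rows with an arbitrary $r\times r$ block in the remaining rows.
\item For each of the finitely many choices of slot pairs and zero positions, this set of $C$ is the image of a smooth map from a manifold of real dimension $3k+2r^2$. It is therefore null in $\mathbb{C}^{n\times r}\cong\mathbb{R}^{2nr}$ unless $3k\ge 2r(n-r)$.
\end{itemize}
For $r=\lfloor n/2\rfloor$ this forces $k=\Omega(n^2)$.

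Conversely, $r(n-r)$ rotations suffice. Make the first $r$ slots independent; each dependent $K_i=\sum_{\ell\le r}\lambda_\ell K_\ell$ is then zeroed by at most $r$ rotations pairing slot $i$ with the slots $\ell$. These realize a unitary whose $i$-th row is the normalized kernel vector $(-\lambda_1,\dots,-\lambda_r,1)$. So your QR fallback already has the right order $O(nr)$, and your restriction to bounded $r$ (or bounded $n-r$) is exactly where $O(n)$ holds.

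The paper's own proof does not close this gap either. It only asserts that a ``standard elimination procedure'' needs $O(n)$ applications of \textbf{C2'}, but standard Givens elimination costs $\Theta(nr)$. The second variant should claim $O(r(n-r))$ applications of \textbf{C2'}.
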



This proposition shows that the rewrite system is expressive enough to recover a minimal Kraus representation of a channel.
In practice, this means redundant Kraus terms produced by earlier compilation stages can be eliminated before circuit synthesis. 
Since the synthesis pipeline introduces block-encodings and channel-selection structures for each Kraus operator, reducing the Kraus count could directly reduce the number of ancillas for channel-LCU, and may reduce the number of elementary gates and ancillas in the final circuit.

Furthermore, the above rules can not only minimize the number of Kraus operators, but also simplify Kraus operators themselves to reduce block-encoding cost. We illustrate below.

\begin{example}[Unitary mixing for dephasing]
Consider the single-qubit dephasing channel $\mathcal{E}$, which is widely used in quantum computation~\cite{NC10}. It is usually written in this form:
\[
\mathcal{E}\rbra{\rho} = \ket{0}\bra{0}\rho\ket{0} \bra{0}+ \ket{1}\bra{1}\rho\ket{1} \bra{1}.
\]
Therefore, it is translated directly in our \ChannelIR{} as
$\mathcal{E} = \texttt{channel}\cbra*{I/2+Z/2, I/2-Z/2}$,
yielding two Kraus operators with $2$ Pauli terms in each operator.
We can apply Rule \textbf{C2'} with $a = b = \frac{1}{\sqrt{2}}$,
giving
$ \texttt{channel}\cbra*{I/2+Z/2, I/2-Z/2} \Rightarrow 
\texttt{channel}\cbra*{I/\sqrt{2},Z/\sqrt{2}}$,
which yields two Kraus operators with only $1$ Pauli term in each operator.

By the above rewriting, we can reduce both the number of gates and ancillas in the circuit.
\end{example}

\subsection{From \ChannelIR{} to Circuit Implementations}


\paragraph{Objective and interface}
The input to the circuit implementation stage is a superoperator $\mathcal{E}$ expressed in \ChannelIR{} in Kraus form as $\mathsf{channel}\cbra{A_1, A_2, \dots, A_n}$
with each $A_j = \sum_{k} \beta_{jk} P_{jk}$ in a Pauli-sum form or a block-encoding $\mathcal{B}[A_j]$ is given.
The output will be a description of a quantum circuit that implements this superoperator.

\paragraph{Implementation procedure}
The compiler constructs a corresponding quantum circuits in two steps, which follows the standard construction as in~\cite{cleve2017lindblad,Li2023simulating}.
\begin{itemize}
    \item Construct a block-encoding of each Kraus operator written as a Pauli sum.
    \item Combine these block-encodings using the channel-LCU construction.
\end{itemize}

This decomposition follows directly from the structure of the IR.

\textbf{Step 1: Synthesize a block-encoding for each Kraus operator.}
The first step is to construct a block-encoding $\mathcal{B}[A_j]$ for each Kraus operator $A_j$. In our representation, each $A_j$ is either given a block-encoding or is expressed as a linear combination of Pauli unitaries. For the latter case, as Pauli operators are unitary and they trivially serve as their own block-encodings, we can directly apply~\Cref{thm:LCU} to construct a block-encoding (see~\Cref{fig:lcu-block-encoding}). 

\begin{figure}[ht]
  \centering

  \begin{subfigure}[t]{0.51\linewidth}
    \centering
    \resizebox{\linewidth}{!}{%
    \begin{quantikz}
        & \gate{\PREPARE_R}
        & \ctrl{2}
        & \gate{\PREPARE_L^\dagger}
        & \qw \\
        & \qw
        & \gate[wires=2][2.0cm]{\SELECT}
        & \qw
        & \qw \\
        & \qw
        & \qw
        & \qw
        & \qw
    \end{quantikz}
    }
    \caption{LCU block-encoding circuit for the Pauli sum
    $\sum_{j=0}^{m-1} \beta_j P_j$.}
    \label{fig:lcu-block-encoding}
  \end{subfigure}\hfill
  \begin{subfigure}[t]{0.47\linewidth}
    \centering
    \resizebox{\linewidth}{!}{%
    \begin{quantikz}
      \lstick{$\ket{0}^{\otimes \ell}$}
        & \gate{\PREPAREC}
        & \ctrl{1}
        & \qw \\
      \lstick{$\ket{0}^{\otimes a}$}
        & \qw
        & \gate[2]{\SELECTC}
        & \meter{} \rstick{$\ket{0}^{\otimes a}$} \\
        & \qw
        & \qw
        & \qw
    \end{quantikz}
    }
    \vspace{-0.5em}
    \caption{Circuit structure for the channel-LCU construction.}
    \label{fig:channel-lcu}
  \end{subfigure}

  \caption{Comparison of LCU-style circuit constructions.}
  \label{fig:lcu-circuits}
\end{figure}


\textbf{Step 2: Lift from Kraus block-encodings to a channel via channel-LCU.}

After block-encodings for all Kraus operators are constructed, the compiler synthesizes the full channel using the channel-LCU method~\cite{cleve2017lindblad,Li2023simulating} which we outline below.

Given block-encodings $\mathcal{B}[A_j]$ as $\alpha_j$-block-encodings of $A_j$, setting
$\PREPAREC$ to be a unitary satisfying
\[
\PREPAREC\ket{0} = \frac{1}{\sqrt{\sum_j \alpha_j^2}}\sum_j \alpha_j\ket{j}.
\]
and $\SELECTC = \sum_{j=0}^m \ket{j}\bra{j}\otimes \mathcal{B}[A_j]$,
the channel-LCU construction implements $\mathcal{E}\rbra{\rho} =\sum_j A_j\rho A_j^{\dagger}$
using the circuit shown in~\Cref{fig:channel-lcu}.




 
\subsection{Discussion}

\paragraph{Why choose the Kraus representation?}

A natural question is why ChannelIR chooses Kraus form as its core channel representation, rather than other standard representations such as circuit descriptions, Choi matrices, or Stinespring/unitary dilations. Our choice is driven by the needs of compilation. The IR should preserve channel semantics at a manipulable level, support algebraic optimization, and lower cleanly to the block-encoding and channel-LCU backend used in our implementation. Under these criteria, Kraus form provides the best balance (see~\Cref{tab:channel-representations} for the comparison).

\begin{table}[h]
\centering
\footnotesize
\caption{Comparison of channel representations with respect to ChannelIR design goals.}
\label{tab:channel-representations}
\vspace{-1em}
\begin{tabular}{p{2.3cm} p{2.4cm} p{3.4cm} p{3.1cm}}
\toprule
Representation 
& IR manipulation 
& Lowering to LCU 
& Representation cost \\
\midrule

Kraus
& Supports algebraic rewrite rules
& Directly maps to block-encoding and channel-LCU
& List of $n$-qubit operators \\

Choi matrix
& Hard to manipulate structurally
& Requires conversion to Kraus form
& A $2^{2n}\!\times\!2^{2n}$ matrix \\

Stinespring 
& Operator structure hidden in unitary
& Not aligned with channel-LCU synthesis
& Requires additional workspace \\

\bottomrule
\end{tabular}
\vspace{-1em}
\end{table}

\paragraph{Comparison with Cobble}
Cobble~\cite{yuan2025cobblecompilingblockencodings} is closely related to our \ChannelIR{} in that it elevates block-encodings to a first-class quantum programming abstraction, with a globally designed language for block-encodings of matrix operators, whose semantics are given by their circuit realizations. 

Cobble and \ChannelIR{} differ primarily in abstraction level and compilation focus. Cobble is a high-level language centered on composing block-encoded matrices, where block-encoding is the core abstraction and programs are organized around matrix algebra. In contrast, \ChannelIR{} is a multi-level intermediate representation tailored for quantum channel compilation: it includes Pauli primitives and Kraus operators as first-class objects, preserves channel structure for algebraic rewriting, and supports more flexible (including complex-coefficient) LCU constructions. This design enables \ChannelIR{} to eliminate redundancies and optimize at the channel level before circuit implementation, making it better suited for channel-first compilation workflows, whereas Cobble is stronger for expressing high-level polynomial operations over block-encoded matrices.

\begin{table}[ht]
\centering
\small
\caption{Comparison between Cobble and \ChannelIR.}
\label{tab:cobble-comparison}
\vspace{-0.5em}
\begin{tabular}{p{2.8cm}p{5.0cm}p{5.0cm}}
\toprule
\textbf{Aspect} & \textbf{Cobble} & \ChannelIR{} \\
\midrule
Design level
& High-level language for algebra over block-encoded matrices
& Multi-level intermediate representation for channel compilation \\

Atomic abstraction
& Block-encoding is the basic programming object
& Pauli primitives are basic objects; block-encoding is a lowering mechanism or optional primitive \\

Semantic focus
& Programs are organized around block-encoded matrix expressions
& Terms denote the underlying Kraus operators for the channel being compiled \\

Compilation goal
& Express and compose block-encoded linear-algebra operators
& Preserve Kraus and Pauli-sum structure for channel-first rewriting and circuit implementation \\

LCU coefficients
& Real-coefficient LCU
& Supports complex coefficients in LCU which may simplify circuit implementation (see Example \ref{example:tfim} and \ref{example:all-pauli}) \\


Strength
& Support polynomials over matrices being block-encoded
& Natural for channel compilation, Kraus-form simplification, and channel-LCU lowering \\
\bottomrule
\end{tabular}
\vspace{-1em}
\end{table}

\section{Structure-aware Optimization for LCU}
The previous section presents a circuit implementation path for channels in \ChannelIR{},
based on two layers of LCU constructions: channel-LCU at the level of Kraus operators,
and ordinary LCU within the block-encoding of each Kraus operator.
Although this implementation path is systematic, its naive realization introduces
heavily controlled multiplexor structures that dominate the final circuit cost.
The main task of this section is therefore to optimize the resulting \SELECT{} oracles,
which arise at different levels of the implementation pipeline and exhibit different
structural properties.

\subsection{Optimization Targets and Cost Model}
The optimization targets are the \SELECT{} oracles arising from channel-LCU and from
the block-encoding of individual Kraus operators. In both cases, the corresponding
multiplexor-style implementations require cascaded multi-controlled gates, which constitute
the main bottleneck in the final circuit cost.

Although both layers incur substantial overhead from multiplexor-style control,
they expose different amounts of structure to the compiler. The channel-LCU
\SELECT{} operates over block-encodings of Kraus operators and therefore treats
them largely as black-box components. By contrast, the Kraus-level
\SELECT{} ranges over Pauli operators preserved in the Kraus-level IR, exposing
additional algebraic structure that enables more specialized optimization.

\begin{figure}[h]
\centering
\begin{subfigure}{0.58\linewidth}
\resizebox{0.8\textwidth}{!}{
    \begin{quantikz}
 &&\octrl{2} & \octrl{2} & \ctrl{2} & \ctrl{2} & \\
 && \octrl{1}& \ctrl{1} &\octrl{1}& \ctrl{1} &  \\
   &\qwbundle{k} & \gate[2, style={fill=orange!20}]{\mathcal{B}[A_0]} & \gate[2, style={fill=orange!20}]{\mathcal{B}[A_1]} & \gate[2, style={fill=orange!20}]{\mathcal{B}[A_2]} & \gate[2, style={fill=orange!20}]{\mathcal{B}[A_3]} & \\
& \qwbundle{n} &&&&& \\
\end{quantikz}}
\caption{Channel-level \SELECTC{} for a quantum channel
$\mathcal{M}(\cdot) = \sum_{j=0}^3 A_j \cdot A_j^{\dag}$, shown as a uniformly
controlled multiplexor over Kraus block-encodings.
}
\end{subfigure}
\hspace{0.02\linewidth}
\begin{subfigure}{0.37\linewidth}
\centering
\begin{quantikz}
 &\octrl{2} & \octrl{3} & \ctrl{3} &  \\
 & \octrl{1}& \ctrl{2} &\octrl{2}&   \\
    & \gate[1, style={fill=orange!20}]{Z} & & \gate[1, style={fill=orange!20}]{X} &  \\
  & &\gate[1, style={fill=orange!20}]{Z}&\gate[1, style={fill=orange!20}]{X}&  \\
\end{quantikz}
\caption{Kraus-level \SELECT{} for $\mathcal{B}[A_0]$, where
$A_0 = a_0 ZI + a_1 IZ + a_2 XX$.}
\end{subfigure}
\caption{Two optimization targets in the LCU implementation pipeline:
(a) the channel-level multiplexor over Kraus block-encodings, and
(b) the block-encoding-level multiplexor over Pauli terms.}
\label{fig:channel-be2}
\end{figure}

\paragraph{Primary Resource Metric.}
The primary resource metric in this section is the total control load of the multi-controlled gates induced by the two \SELECT{} oracles. 
This choice is justified by the fact that a multi-controlled gate with $n \geq 2$ controls requires at least $n + 1$ $T$ gates to realize~\cite{Beverland2020lower},
which is a key bottleneck in fault-tolerant quantum computing under prevailing quantum error-correction codes~\cite{Fowler2012surface}.
Both techniques target this bottleneck: Technique~I reduces control arity directly by flattening,
while Technique~II uses the weighted control cost $\mathrm{w}(l)\mathrm{w}_s(g_l)$ to capture the number of controlled Pauli operations induced by each decomposition term.

\subsection{Technique I: Conditional Flattening for Channel-Level \SELECT}
At the channel level, the \SELECT{} oracle is a uniformly controlled gate sequence over Kraus block-encodings, indexed by the Kraus register. 
Because the selected objects are already block-encoding circuits and are therefore treated as black-box components, this layer is more naturally 
amenable to generic control-flow optimization that reduces the arity of the induced multi-controlled gates.

We adopt conditional flattening~\cite{Yuan2024tcomplexity}, also known as unary
iteration~\cite{Babbush2018encoding}, as our first optimization technique. Its
core idea is to replace a uniformly controlled family of high-control gates by a
sequence of gates with smaller control arity, using ancilla-assisted updates of
the control pattern. Concretely, a multi-controlled gate with $n$ controls is
replaced by one with $n-1$ controls together with two Toffoli gates, at the cost
of one ancilla qubit. Applied to a channel-LCU object with $N$ Kraus operators,
this transformation reduces the $T$-count of the channel-level \SELECT{} from
$\mathcal{O}(N\log N)$ to $4N-4$~\cite{Babbush2018encoding}, while requiring
$\mathcal{O}(\log N)$ additional ancilla qubits. We therefore use conditional
flattening as a generic optimization for channel-level \SELECT{}, especially in
settings where reducing $T$-gate cost is more important than minimizing ancilla
usage.
\begin{figure}
\begin{minipage}{0.19\textwidth}
\resizebox{0.99\textwidth}{!}{
\begin{quantikz}
\lstick{$c_0$} &   \ctrl{2} & & \ctrl{2} & \\
\lstick{$c_1$} &  \ctrl{1} & & \ctrl{1} & \\
\lstick{$\ket{0}$} & \targ{} & \ctrl{2} & \targ{} &\rstick{$\ket{0}$} \\
\lstick{$c_2$} & &   \ctrl{1} & &  \\
\lstick{$\ket{\psi}$} & \qwbundle{n}  & \gate{\mathcal{B}[A_0]} &  & \\
\end{quantikz}
}
\caption{Flattening a $3$-controlled gate.}
\label{fig:flattening}
\end{minipage}
\hspace{0.02\textwidth}
\begin{minipage}{0.77\textwidth}
\resizebox{\textwidth}{!}{
\centering
\begin{quantikz}[column sep = 0.2cm, row sep = 0.5cm]
\lstick{$c_0$} & \octrl{1} &&&&&&&&&&&&& &&&&&&&&&&& &&&& \octrl{1} &&\\
&\wireoverride{n} & \ctrl{2} & & & & & & \ctrl{2}& & & & & & \ctrl{2} & \targ{}  &\ctrl{2} & & & & & & \ctrl{2}& & & & & & \ctrl{2} & &\wireoverride{}  \\
\lstick{$c_1$}&  & \octrl{1} & & & & & & &  & & & & & \ctrl{1} & & \octrl{1} & & & & & & &  & & & & & \ctrl{1} & & &  \\
&\wireoverride{n}& \wireoverride{n} & \ctrl{2} &  & \ctrl{2} & & \ctrl{2} & \targ{} &\ctrl{2} & & \ctrl{2} & & \ctrl{2} & & \wireoverride{n} &\wireoverride{n}& \ctrl{2} &  & \ctrl{2} & & \ctrl{2} & \targ{} &\ctrl{2} & & \ctrl{2} & & \ctrl{2} & & \wireoverride{n}   \\
\lstick{$c_2$} & & & \octrl{1} & & & & \ctrl{1} & & \octrl{1} & & & & \ctrl{1} &&& &  \octrl{1} & & & & \ctrl{1} & & \octrl{1} & & & & \ctrl{1} & &&&\\
& \wireoverride{n} & \wireoverride{n} & \wireoverride{n} & \ctrl{1}& \targ{} & \ctrl{1} & & \wireoverride{n} & \wireoverride{n} & \ctrl{1} & \targ{} & \ctrl{1} & & \wireoverride{n} & \wireoverride{n} & \wireoverride{n} & \wireoverride{n} & \ctrl{1}& \targ{} & \ctrl{1} & & \wireoverride{n} & \wireoverride{n} & \ctrl{1} & \targ{} & \ctrl{1} & & \wireoverride{n} & \wireoverride{n}\\
\lstick{$\ket{\psi}$} & \qwbundle{n} & & & \gate{\mathcal{B}[A_0]} & &  \gate{\mathcal{B}[A_1]} & & & &\gate{\mathcal{B}[A_2]} & & \gate{\mathcal{B}[A_3]} & & & & & &  \gate{\mathcal{B}[A_4]} & &  \gate{\mathcal{B}[A_5]} & & & &\gate{\mathcal{B}[A_6]} & & \gate{\mathcal{B}[A_7]} & & &&  \\
\end{quantikz}
}
\vspace{-1em}
\caption{A channel-level \SELECTC{} oracle with $8$ Kraus operators after
optimization by conditional flattening (unary iteration).}
\label{fig:optI-circuit}
\end{minipage}
\end{figure}


\subsection{Technique II: Structure-Aware Heuristic for Pauli-Sum \SELECT}
We now turn to the block-encoding-level \SELECT{}, where the Pauli-sum representation exposes 
additional algebraic structure and enables a more specialized heuristic optimization.

\subsubsection{Overview and Optimization Problem}

Unlike Technique~I, the block-encoding-level \SELECT{} oracle is not built over opaque block-encoding components. 
Because \ChannelIR{} preserves each Kraus operator as a Pauli-sum expression, the compiler still has access to Pauli-level algebraic structure when constructing the inner multiplexor. 
Technique~II exploits this structure by optimizing the assignment of Pauli strings to control addresses and the factorization of the resulting multiplexor into simpler controlled operations.

We call the latter factorization a monotone-control decomposition: each factor is triggered by the subset of control bits set to $1$, rather than by an exact control string. 
Accordingly, the optimization is formulated over Pauli-string structure alone. Phase information is tracked consistently with the chosen assignment but does not affect the objective, and unused control addresses are harmless because their amplitudes in the \PREPARE{} oracle are zero. We thus formulate the following problem.
\begin{problem}[Optimization of Pauli-sum \SELECT{}]
Given an $n$-qubit Pauli sum
\(
A = \sum_{j=1}^{m} \beta_j P_j,
\)
where each $P_j$ is a Pauli string and $s = \lceil \log_2 m \rceil$, find

(1) an injective assignment $\pi : \{1,\dots,m\} \to \{0,1\}^s$ of Pauli strings to control addresses, and

(2) a family of Pauli strings $\{g_l\}_{l \in \{0,1\}^s}$ forming a monotone-control decomposition,

such that for every assigned address $\pi(j)$,
\(
P_j = \prod_{l \leq_b \pi(j)} g_l,
\)
where $\leq_b$ denotes the bitwise partial order on control addresses.

The objective is to minimize the weighted control cost
\(
C = \sum_{l \in \{0,1\}^s} \mathrm{w}(l)\,\mathrm{w}_s(g_l),
\)
where $\mathrm{w}(l)$ is the Hamming weight of the control address and $\mathrm{w}_s(g_l)$ is the Pauli weight of $g_l$.
\end{problem}

This isolates the combinatorial core of the optimization problem. Since solving it exactly is intractable in general, we develop a structure-aware heuristic specialized to Pauli sums.

\subsubsection{Semantic Invariants and Pauli Reformulation}
This heuristic relies on three ingredients: semantic invariance under permutation, existence of monotone-control decompositions, and an efficient Pauli-specific reformulation. We introduce them in turn.


We first observe that permuting the terms of an LCU does not change the postselected operator implemented by its block-encoding circuit, provided that the coefficients are permuted accordingly. This semantic invariance justifies searching for a cost-improving address assignment.
\begin{proposition}[Equivalence of LCU Block-encoding Circuits Under Permutation]\label{prop:permute}
Suppose we have two LCU expressions $ C_1 $ and $ C_2 $ defined as:
$$
C_1 = \sum_{k=1}^m \beta_k V_k \quad \text{and} \quad C_2 = \sum_{k=1}^m \beta_{\pi(k)} V_{\pi(k)}
$$
where $\pi$ is a permutation of $\{1,2,\dots,m\}$. Then we have the following equivalence: 
$$
\forall \rho \in \mathcal{D}(\mathcal{H}), (\bra{0}\otimes I )\mathcal{B}[C_1] (\ket{0}\bra{0}\otimes\rho)(\ket{0}\otimes I) = (\bra{0}\otimes I )\mathcal{B}[C_2] (\ket{0}\bra{0}\otimes\rho)(\ket{0}\otimes I).
$$
\end{proposition}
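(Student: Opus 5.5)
The plan is to compute the postselected operator of each circuit explicitly using \Cref{thm:LCU}, and then observe that the two results coincide because a finite sum does not depend on the order of its terms. Fix the standard LCU construction for $C_1$. Write $\beta_k = \abs{\beta_k} e^{\ii\phi_k}$ and take $\beta = \sum_k \abs{\beta_k}$. Choose $\PREPARE_R\ket{0} = \sum_k \sqrt{\abs{\beta_k}/\beta}\,\ket{k}$, and let $\PREPARE_L$ carry the conjugate phases; alternatively, absorb the phases into the $V_k$, which \ChannelIR{} allows. The \SELECT{} oracle is $\sum_k \ket{k}\bra{k}\otimes V_k$. For $C_2$, use the same recipe with index $k$ now attached to the pair $(\beta_{\pi(k)}, V_{\pi(k)})$. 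Since each $V_k$ is a Pauli unitary, it is an exact $(1,0,0)$-block-encoding of itself. \Cref{thm:LCU} with $\varepsilon_1=\varepsilon_2=0$ then gives
\[
(\bra{0}\otimes I)\,\mathcal{B}[C_1]\,(\ket{0}\otimes I) = \tfrac{1}{\beta}\sum_k \beta_k V_k, \qquad (\bra{0}\otimes I)\,\mathcal{B}[C_2]\,(\ket{0}\otimes I) = \tfrac{1}{\beta'}\sum_k \beta_{\pi(k)} V_{\pi(k)}.
\]

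Next I would check that the normalizations agree: $\beta' = \sum_k \abs{\beta_{\pi(k)}} = \sum_k \abs{\beta_k} = \beta$, because $\pi$ is a bijection. Reindexing the second sum by $j=\pi(k)$ shows that the two block operators are the same matrix $M = C_1/\beta$. The statement in the proposition is written with $\ket{0}\bra{0}\otimes\rho$ conjugated. I read the intended expression as $(\bra{0}\otimes I)\,U(\ket{0}\bra{0}\otimes\rho)\,U^\dagger(\ket{0}\otimes I)$, the unnormalized postselected state. Expanding $\ket{0}\bra{0}\otimes\rho = (\ket{0}\otimes I)\rho(\bra{0}\otimes I)$, both sides become $M\rho M^\dagger$, so they are equal. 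As literally written, without $U^\dagger$, both sides still reduce to the same expression $M\rho$, and the argument is identical.

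The main obstacle is not any computation but pinning down what $\mathcal{B}[C]$ means. The circuits themselves really do differ: they are unitaries with different \SELECT{} orderings, and their off-diagonal blocks may differ. So the claim must be read as a statement about the postselected block only, and the proof must use the fact that the \PREPARE{} amplitudes are permuted consistently with the \SELECT{} branches. I would make explicit that the \PREPARE{} state for $C_2$ is the permuted amplitude vector, namely $(P_\pi\otimes I)$ applied to that of $C_1$. From this one also gets a second, more structural argument: $\mathcal{B}[C_2] = (P_\pi^\dagger\otimes I)\,\mathcal{B}[C_1]\,(P_\pi\otimes I)$ up to this convention. Because $P_\pi\ket{0}$ need not equal $\ket{0}$ in general, I would rely on the direct computation above rather than this conjugation identity.
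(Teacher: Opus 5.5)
Your proposal is correct and is essentially the paper's argument. The paper just notes that permuting the \PREPARE{} amplitudes together with the \SELECT{} branches preserves the alignment of coefficients and unitaries; your explicit computation via \Cref{thm:LCU} (equal normalizations $\beta'=\beta$, reindexing the sum, and checking both the literal and the $U^\dagger$-conjugated reading of the postselected expression) makes that one-line observation precise, and you rightly do not rely on the conjugation-identity aside.
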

Next, we show that once a control-address assignment is fixed, a corresponding monotone-control decomposition always exists. 
This does not yet yield an efficient optimizer, but it ensures that the search space is well defined; for convenience, we state the existence result first for arbitrary unitary matrices.

\begin{proposition}[Existence of monotone-control decomposition]\label{prop:existence}
For an arbitrary set of $n$ unitary matrices $\{V_1, \cdots, V_n\}$, there exists a monotone-control decomposition $\{U_1, \cdots, U_n\}$ such that 
\(
\forall k \in \{1, \cdots, n\}, V_k = \Pi_{(l\leq_b k)} U_l.
\)

\end{proposition}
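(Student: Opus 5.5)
We will prove the statement by Möbius inversion over the Boolean lattice of control addresses, defining the factors $U_l$ recursively along a fixed linear extension of $\leq_b$.

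\textbf{Conventions.} The indices $k$ are identified with binary control addresses in $\{0,1\}^s$, and $\leq_b$ is the bitwise order. Addresses with no assigned $V_k$ are given $V_k = I$; any choice works, since the statement only constrains assigned addresses. Because unitaries need not commute, we fix the order of the product $\prod_{l\leq_b k} U_l$ once and for all. We take the natural order: factors appear in increasing integer value of $l$, which is a linear extension of $\leq_b$. This is exactly the order in which the monotone-controlled gates $\Lambda_l(U_l)$ would be applied in the circuit. With this convention, the product for address $k$ is the subsequence of the global gate sequence whose control sets are contained in the support of $k$.

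\textbf{Construction by induction.} We define $U_l$ by strong induction on the integer value of $l$.
\begin{itemize}
\item Base case: set $U_0 = V_0$, since $0$ is the only address below $0$.
\item Inductive step: assume $U_{l'}$ has been defined for all $l' < k$. Every $l \leq_b k$ with $l \neq k$ satisfies $l < k$ as integers, so all factors in $\prod_{l\leq_b k} U_l$ other than $U_k$ are already defined.
\item Write $W_k = \prod_{l\leq_b k,\, l\neq k} U_l$, taken in the fixed order. Since $k$ is the largest index in its own product, $U_k$ is the last factor, so the required identity reads $V_k = W_k U_k$ (or $U_k W_k$, depending on the multiplication convention).
\item Set $U_k = W_k^{\dagger} V_k$. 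This is unitary as a product of unitaries, and it satisfies the identity for $k$ by construction.
\end{itemize}
The identities for earlier addresses are untouched, because defining $U_k$ only affects products over addresses $k' \geq_b k$, and all of those are handled later in the induction.

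\textbf{Main obstacle.} The only delicate point is ordering consistency. The same global sequence of $U_l$ must serve simultaneously as the product for every $k$. This holds because, for each $k$, restricting a fixed linear extension to the down-set $\{l : l \leq_b k\}$ yields a well-defined ordered product, and that ordered product is exactly what the circuit computes on control input $k$: the gate $\Lambda_l(U_l)$ fires if and only if $l \leq_b k$.

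\textbf{Pauli case.} When the $V_k$ are Pauli strings up to phase, each $U_k$ is again a Pauli string up to phase, because Pauli strings with phases are closed under products and inverses. Specializing the construction to this case yields the family $\{g_l\}$ required in the optimization problem.
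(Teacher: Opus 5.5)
Your proof is correct and is essentially the paper's argument. Both recover each factor by inverting along the bitwise order, $U_k = \bigl(\prod_{l<_b k} U_l\bigr)^{\dagger} V_k$, once all strictly smaller addresses are fixed. The differences are minor: you induct on the integer value of $k$ rather than on its Hamming weight, which also avoids the paper's remark about ``irrelevant'' indices above $n$. You also fix the product order explicitly, so the dagger of the ordered product is taken correctly for non-commuting unitaries; the paper's formula $\prod_{l<_b k} U_l^{\dagger}$ leaves that order implicit.
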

The above existence result is too weak for implementation purposes at the level of arbitrary unitary matrices. 
The key simplification in our setting is that the objects being manipulated are Pauli strings, which admit a compact binary representation and support multiplication by bitwise operations rather than dense matrix algebra.

\begin{definition}[Symplectic representations of Pauli strings~\cite{nielsen2010quantum}]
Suppose $P =\ii^d g$ is an $n$-qubit Pauli string, where $g \in \{ I, X,Y,Z\}^{\otimes n}, d\in\{0, 1, 2,3\}$. Then $r(P) = r(g) = (g_z\lvert g_x) \in \{0,1\}^{\otimes 2n}$ is the symplectic vector of $g$. For all $i \in \{0, \cdots, n-1\}$, 
\[
r(g)_i = \begin{cases} 
1 & \text{if } g_i = X \text{ or } Y \\
0 & \text{if } g_i = I \text{ or } Z 
\end{cases}
\quad
r(g)_{n+i} = \begin{cases} 
1 & \text{if } g_i = Y \text{ or } Z \\
0 & \text{if } g_i = I \text{ or } X 
\end{cases}
\]
For example, the $5$-qubit Pauli string $g = XIZYI$ has the symplectic vector $r(g) = (10010\lvert00110)$.  
\end{definition}
Under this representation, multiplication of Pauli strings reduces to bitwise vector addition together with a simple phase update. As a result, both address assignment and decomposition recovery can be carried out over binary data rather than dense matrices.

\subsubsection{A Heuristic Algorithm}

The optimization problem above is combinatorial and appears intractable in general, so we use a structure-aware heuristic specialized to Pauli sums. At a high level, the heuristic proceeds in two stages: it first assigns Pauli strings to control addresses, and then reconstructs a monotone-control decomposition consistent with that assignment. Thus, the input is a Pauli-sum LCU object, while the output is an equivalent implementation plan---an address assignment together with decomposition factors---that preserves the implemented postselected operator and heuristically minimizes the weighted control cost. Figure~\ref{fig:technique2-flow} summarizes this workflow.

In the first stage, the algorithm searches for a favorable placement of Pauli strings in the control space. Rather than searching over all permutations, it uses the maximum-coverage solution as a starting point, enlarges the search space to nearby candidate subspaces, and finally selects the address assignment minimizing the weighted control cost. The output of this stage is an ordered address table for the target Pauli strings.

In the second stage, the algorithm reconstructs the monotone-control decomposition from the chosen address assignment. Once the target Pauli strings have been placed, the factors $g_l$ are recovered by inversion along the bitwise partial order. This stage also determines the corresponding phase information, yielding a decomposition that is semantically consistent with the original Pauli-sum object.

For readability, the main text presents only the high-level workflow and the role of each stage. Full pseudocode and implementation details, including sorting, binary conversion, subspace assignment, and inversion procedures, are deferred to the appendix.

\begin{figure}[t]
\centering
\resizebox{0.9\textwidth}{!}{
\begin{tikzpicture}[
>=Latex,
node distance=6mm and 7mm,
font=\small,
box1/.style={
draw,
rounded corners=2pt,
thick,
fill=orange!12,
minimum height=9mm,
minimum width=2.9cm,
align=center,
inner sep=3pt
},
box2/.style={
draw,
rounded corners=2pt,
thick,
fill=blue!10,
minimum height=9mm,
minimum width=2.9cm,
align=center,
inner sep=3pt
},
bridge/.style={
draw,
rounded corners=2pt,
thick,
fill=gray!10,
minimum height=8mm,
minimum width=2.6cm,
align=center,
inner sep=3pt
},
labelbox/.style={
draw=none,
fill=none,
font=\bfseries\small,
align=center
},
arr/.style={->, thick}
]

\node[box1, xshift = -5mm] (symp) {Convert Pauli strings\\to symplectic form};
\node[labelbox, left= 5mm of symp] (s1label) {Stage I\\Address Assignment};
\node[box1, right=of symp] (subspace) {Generate candidate\\covering subspaces};
\node[box1, right=of subspace] (assign) {Select the best \\ address assignment};

\node[bridge, below=3mm of assign] (table) {Ordered address table};

\node[box2, below=15mm of assign] (invert) {Recover monotone-\\control factors};
\node[box2, left=of invert] (phase) {Reconstruct\\phase data};
\node[box2, left=of phase] (output) {Output implementation\\plan};
\node[labelbox, left = 6mm of output] (s2label) {Stage II\\Reconstruction};
\draw[arr] (symp) -- (subspace);
\draw[arr] (subspace) -- (assign);
\draw[arr] (assign) -- (table);
\draw[arr] (table) -- (invert);
\draw[arr] (invert) -- (phase);
\draw[arr] (phase) -- (output);
\node[draw, dashed, rounded corners=4pt, inner sep=5pt, fit=(symp)(subspace)(assign), label=above:{\bfseries\small structure discovery}] {};
\node[draw, dashed, rounded corners=4pt, inner sep=5pt, fit=(invert)(phase)(output), label=above:{\bfseries\small semantic reconstruction}] {};
\end{tikzpicture}
}
\caption{A sketched workflow of Technique~II. The first stage assigns Pauli strings into low-cost control addresses by exploiting symplectic structure, while the second stage reconstructs a monotone-control decomposition and the associated phase information for the final implementation.}
\label{fig:technique2-flow}
\end{figure}
\subsubsection{Examples and Discussion}
We present two examples that illustrate the effectiveness of Technique~II. 
The first demonstrates how the heuristic simplifies the block-encoding of a representative physical instance, 
while the second shows that the method attains optimality on a highly structured family of Pauli-sum objects.
\begin{example}
\label{example:tfim}
Consider the $1D$ Transverse-Field Ising model (TFIM) with damping for $N = 3$ qubits: 
\[
H = -\rbra*{\sum_{i=1}^{N-1} Z_iZ_{i+1} + Z_NZ_1} - \sum_{i=1}^{N-1}X_i, \quad L_j = \sqrt{\gamma}(X_j - \ii Y_j)/2, j \in \{1, \cdots, N\}
\]
The optimization target is the block-encoding circuit of Kraus operator $A_0$ created by the channel-LCU method~\cite{cleve2017lindblad}: 
\[
A_0 = \rbra*{1 - \frac{N\delta \gamma}{2}}I + \ii\delta\rbra*{\sum_{i=1}^{N-1} Z_iZ_{i+1} + Z_NZ_1+\sum_{i=1}^{N}X_i} - \delta \gamma \sum_{i=1}^{N}Z_i
\]
There are $10$ terms in total, so $4$ control qubits are required. Table~\ref{tab:example-TFIM} lists the assigned control values, the reordered target Pauli terms, and the resulting decomposition terms $g$.
For this instance, the assignment is optimal under our cost metric, yielding an implementation with $3$ CZ gates and $3$ Toffoli gates.
This substantially improves over the original construction, which uses $9$ multi-controlled gates, each with $4$ control qubits.
\begin{table}[h]
\centering
\caption{The optimal order of Pauli strings and corresponding $g$s and coefficients in TFIM damping model.}
\vspace{-0.5em}
\resizebox{0.9\linewidth}{!}{
\begin{tabular}{|c|c|c|c|c|c|c|c|c|c|c|}
\hline
\textbf{control value} & $0000$  & $0001$  & $0010$  &  $0011$ &  $0100$ & $0101$  &  $0110$ &  $1001$ &  $1010$ & $1100$ \\
\hline
\textbf{operator $P$} & $I$ & $-Z_1$ & $-Z_2$  & $\ii Z_1Z_2$ &  $-Z_3$& $\ii Z_1Z_3$ & $\ii Z_2Z_3$& $\ii X_1$  & $\ii X_2$ & $\ii X_3$  \\
\hline
\textbf{operator $g$} & $I$ & $-Z_1$ & $-Z_2$ & $ I$ & $-Z_3$& $I$ & $ I$& $Y_1$  & $Y_2$ & $Y_3$ \\
\hline
\textbf{coefficient} &$(1 - \frac{N\delta \gamma}{2})$  & $\delta\gamma$ & $\delta\gamma$ & $\ii\delta$ & $\delta\gamma$ & $\ii\delta$ &$\ii\delta$  &$\delta$  & $\delta$ &  $\delta$\\
\hline
\end{tabular}
}
\label{tab:example-TFIM}
\end{table}
\vspace{-1em}
\end{example} 

\begin{example}
This example shows that, for a highly structured Pauli family, Technique~II specializes to a closed-form optimal pattern. 
Consider the linear combination of all $n$-qubit Pauli operators modulo phase: 
$A = \sum_{P_k\in \{I,X,Y,Z\}^{\otimes n}} \beta_k P_k$. The original encoding circuit requires $4^n$ multi-controlled gates, each with $2n$ control qubits.
Our method successfully finds an optimal activated value decomposition for this case: 
\[
\begin{cases}
g_k  = X_{d+1}, \ 2^0 \leq k =2^d \leq 2^{n-1} \ \\

g_k  = Z_{d+1}, \  2^n \leq k =2^{d+n} \leq 2^{2n-1} \\
g_k = I, k = \text{other} \\
\end{cases}
\]
This family is sufficiently structured that the decomposition can be written down in closed form, 
without any additional search for a cost-improving permutation. 
In particular, when the Pauli terms are arranged in reversed lexicographical order, i.e. $k = \mathrm{rev}(r(P_k))$, the above construction is already optimal under the weighted-control cost metric. 
Moreover, because complex coefficients can be absorbed into the \PREPARE{} routine of the LCU construction, no additional controlled-phase gates are required in the resulting \SELECT{} oracle. 
Consequently, the original implementation is compressed to an implementation using only $\mathbf{2n}$ single-controlled gates.
\label{example:all-pauli}
\end{example}

Example \ref{example:tfim} shows effectiveness on a realistic sparse Pauli-sum instance, while Example \ref{example:all-pauli} demonstrates that on a highly structured family
the method admits a closed-form optimal pattern. Together, these examples show two complementary regimes of Technique~II: 
on realistic sparse instances it provides an effective heuristic simplification, while on highly structured Pauli families it admits a closed-form optimum. More broadly, they suggest that the benefit of Technique~II grows with the amount of Pauli-level algebraic structure preserved in the implementation.

\section{\LindFront{}: A Lindbladian Frontend to \ChannelIR}

Lindbladian generators provide a natural language description for many open-system quantum algorithms, while our compilation operates on discrete quantum channels expressed in \ChannelIR. To connect these two levels of abstraction, we introduce \LindFront, a frontend that transforms Lindbladian generator descriptions into short-time channel representations that can be compiled by the rest of our compilation framework. 



\subsection{Frontend Interface}
\LindFront{} accepts a Lindbladian generator specified in the standard form consisting of a Hamiltonian term and a collection of jump operators describing dissipative dynamics. 
In practice, these operators can be provided either as explicit matrices or as Pauli-sum expressions. 

For operators provided 
as explicit matrices, the first step is to convert all Lindbladian operators into matrix-sum representation over Pauli strings. 
This representation matches the syntax of \ChannelIR{} and ensures compatibility with later compilation stages.


The output superoperator is expressed in \ChannelIR{} in Kraus form with operators represented as Pauli matrix sums, preserving algebraic structure that can later be implemented and optimized. 

\subsection{Lowering Algorithms and Guarantees}

\LindFront{} currently supports two such strategies based on prior work~\cite{cleve2017lindblad,Li2023simulating}. 
These strategies differ in their approximation accuracy but share the same input-output interface.

\textit{First-order short-time construction.}
The simplest lowering strategy follows the first-order short-time expansion proposed in~\cite{cleve2017lindblad}. Given a Lindbladian generator consisting of a Hamiltonian term $H$ and jump operators $\{L_j\}_{j=1}^m$, the evolution over a small timestep $\delta$ can be approximated by a superoperator $\mathcal{M}_{\delta}\rbra{\rho} = \sum_{j=0}^m A_j \rho A_j^{\dagger}$
with 
\(
A_0 = I - \frac{\delta}{2}\sum_j L_j^\dagger L_j - \ii \delta H,
\
A_j = \sqrt{\delta}\, L_j .
\)
This construction directly produces a Kraus representation of a superoperator approximating $e^{\delta\mathcal{L}}$. 



In our compiler, these Kraus operators are represented as \textsc{ChannelIR} terms. The resulting superoperator can then be optimized and synthesized by the generic channel compilation pipeline described earlier.

\textit{Higher-order expansion.}
While the first-order construction is simple and efficient, achieving high simulation accuracy may require very small timesteps. To address this limitation, \LindFront{} also supports a higher-order lowering strategy based on the series-expansion technique of~\cite{Li2023simulating}. Utilizing this method, \LindFront{} constructs a more accurate superoperator as an approximation which is expressed in Kraus form within \ChannelIR. Because both strategies produce superoperators of the same structural form, they can both be given as input into the compiler for later processing without modifications.

\textit{Guarantees.}
The frontend provides guarantees at two levels. First, regardless of the lowering strategy used, the frontend always outputs a well-typed \ChannelIR{} program. This ensures that the resulting program can be safely processed by the circuit implementation and optimization passes in our framework. Second, the accuracy of the generated channel as an approximation of $e^{\delta\mathcal{L}}$ depends on the selected lowering method. The first-order construction of~\cite{cleve2017lindblad} introduces an $O(\delta^2)$ approximation error, while higher-order constructions such as~\cite{Li2023simulating} can achieve improved accuracy per timestep. These approximation guarantees are implied directly from the correctness of the underlying algorithms~\cite{cleve2017lindblad,Li2023simulating}.
\section{Evaluation}

We implemented our compilation framework as a Python software package using the Qiskit SDK~\cite{qiskit2024}. The package comprises the channel compiler, the block-encoding optimizer, and the explicit Lindbladian compilation frontend. We organize the evaluation around the following research questions:

\textit{RQ1}. How much does the channel-first compilation pipeline reduce the cost of end-to-end channel simulation circuits, and how does it compare with a circuit-first baseline?

\textit{RQ2}. How much do the two optimization techniques contribute individually and jointly, and how do they compare with existing control-flow optimization baselines?

\textit{RQ3}. Does the Lindbladian frontend produce correct circuits, and what evidence do we have for compilation scalability?

All experiments were conducted on a MacBook Air equipped with an Apple M4 CPU and $16$GB of RAM.

\subsection{RQ1: End-to-End Benefits of Channel-First Compilation}

We first evaluate the cost of complete channel-simulation circuits generated from \textsc{ChannelIR}. This experiment addresses the central question of whether preserving channel and Kraus structure throughout the compilation pipeline yields lower-cost final circuits than a circuit-first realization.

We use two benchmark families: the TFIM damping model generated by the Lindbladian frontend and the hypercube random-walk channel~\cite{TW25}. 
The four compilation settings correspond to enabling optimization technique I on the channel-level selection circuit (Flat), optimization technique II within Kraus block-encodings (Order), or both. 
Where available, we also include Qiskit's Stinespring decomposition as a circuit-first baseline.
Table~\ref{tab:hypercube_lcu_counts_ratios} reports the resource comparison, with `Basic+Basic' denoting the unoptimized channel-first pipeline and all ratios measured against this baseline. 

The results show that both optimization techniques substantially reduce the final circuit cost, with the combined setting achieving up to a $99\%$ reduction in gate count relative to the unoptimized baseline. Individually, the two techniques reduce gate counts to $3\%$--$56\%$ of the baseline, while their combination reduces all reported metrics to $0.75\%$--$5.7\%$. This advantage becomes more pronounced as the problem size increases. Moreover, the channel-first approach consistently outperforms the Stinespring dilation: at $N = 8$, even the unoptimized channel-first baseline requires only $4\%$ of the gates and $2\%$ of the compilation time, whereas at $N = 12$ the Stinespring approach fails to return within $10$ minutes while channel-LCU continues to scale efficiently.
\begin{table*}[t]
\centering
\caption{Resource comparison for channel-LCU circuits on various benchmarks after Qiskit's transpilation. Ratios are measured against the `Basic+Basic' baseline.}
\label{tab:hypercube_lcu_counts_ratios}
\vspace{-0.5em}
\resizebox{\textwidth}{!}{
\begin{threeparttable}
\begin{tabular}{c l c c cccc cccc}
\toprule
\multirow{2}{*}{Benchmark} & \multirow{2}{*}{Case} & \multirow{2}{*}{$\#$Kraus} & \multirow{2}{*}{$\#$Pauli Terms} & \multicolumn{4}{c}{Count} & \multicolumn{4}{c}{Ratio vs. Basic+Basic ($\%$)} \\
\cmidrule(lr){5-8}\cmidrule(lr){9-12}
& & & & Depth & $\#$Gates & $\#$CX & $\#$U3 & Depth & $\#$Gates & $\#$CX & $\#$U3 \\
\midrule

\multirow{4}{*}{TFIM-8}
& Basic+Basic  & 9 & 41 & 519,398 & 737,823 & 358,127 & 379,696 & $100.00$ & $100.00$ & $100.00$ & $100.00$ \\
& Flat+Basic   & 9 & 41 &  55,407 &  69,124 &  29,610 &  39,492 &  $10.67$ &   $9.37$ &   $8.27$ &  $10.40$ \\
& Basic+Order  & 9 & 41 & 292,064 & 415,943 & 202,023 & 213,920 &  $56.23$ &  $56.37$ &  $56.41$ &  $56.34$ \\
& Flat+Order   & 9 & 41 &  29,984 &  37,850 &  16,080 &  21,748 &   $5.77$ &   $5.13$ &   $4.49$ &   $5.73$ \\
\midrule
\multirow{4}{*}{TFIM-12}
& Basic+Basic  & 13 & 61 & 1,294,267 & 1,864,327 & 905,065 & 959,262 & $100.00$ & $100.00$ & $100.00$ & $100.00$ \\
& Flat+Basic   & 13 & 61 &   143,367 &   178,196 &  76,597 & 101,571 &  $11.08$ &   $9.56$ &   $8.46$ &  $10.59$ \\
& Basic+Order  & 13 & 61 &   148,957 &   215,371 & 104,767 & 110,604 &  $11.51$ &  $11.55$ &  $11.58$ &  $11.53$ \\
& Flat+Order   & 13 & 61 &    16,086 &    20,227 &   8,717 &  11,482 &   $1.24$ &   $1.08$ &   $0.96$ &   $1.20$ \\
\midrule
\multirow{5}{*}{hypcubeW-8}
& Stinespring~\cite{qiskit2024} & 16 & 64 &  -- & 2,164,787 & 1,081,133 & 1,083,654 & -- & $2570.39$ & $2648.80$ & $2496.67$ \\
& Basic+Basic & 16  & 64  & 60,268  & 84,220  & 40,816  & 43,404  & $100.00$ & $100.00$ & $100.00$ & $100.00$ \\
& Flat+Basic  & 16  & 64  & 5,257   & 6,766   & 2,816   & 3,920   & $8.72$ & $8.03$ & $6.90$ & $9.03$ \\
& Basic+Order & 16  & 64  & 12,686  & 18,340   & 8,918   & 9,422   & $21.05$ & $21.78$ & $21.85$ & $21.71$ \\
& Flat+Order  & 16  & 64  & 1,177   & 1,709     & 712     & 967     & $1.95$ & $2.03$ & $1.74$ & $2.23$\\
\midrule
\multirow{5}{*}{hypcubeW-12}
& Stinespring~\cite{qiskit2024} & 24 & 96 & N/A \tnote{1} & N/A & N/A & N/A & -- & -- & -- & -- \\
& Basic+Basic     & 24 & 96  & 147,680 & 201,770 & 98,588  & 103,182 & $100.00$ & $100.00$ & $100.00$ & $100.00$ \\
& Flat+Basic       & 24 & 96  & 7,919   & 10,213  & 4,249   & 5,916   & $5.36$& $5.06$ & $4.31$ & $5.73$ \\
& Basic+Order & 24 & 96  & 31,461  & 45,145  & 22,096  & 23,049  & $21.30$ & $22.37$ & $22.41$ & $22.34$ \\
& Flat+Order   & 24 & 96  & 1,799   & 2,628   & 1,093   & 1,487   & $1.22$ & $1.30$ & $1.11$ & $1.44$ \\
\midrule
\multirow{4}{*}{hypcubeW-20}
& Basic+Basic   & 40 & 160 & 339,491 & 499,561 & 245,977 & 253,584 & $100.00$ & $100.00$ & $100.00$ & $100.00$ \\
& Flat+Basic    & 40 & 160 & 13,230  & 17,076  & 7,104   & 9,890   & $3.90$ & $3.42$ & $2.89$ & $3.90$ \\
& Basic+Order & 40 & 160 & 72,303  & 111,123 & 54,829  & 56,294  & $21.30$ & $22.24$ & $22.29$ & $22.20$ \\
& Flat+Order   & 40 & 160 & 3,030   & 4,435   & 1,844   & 2,509   & $0.89$ & $0.89$ & $0.75$ & $0.99$ \\
\midrule
\multirow{4}{*}{hypcubeW-28}
& Basic+Basic & 56 & 224 & 475,251 & 699,452 & 344,409 & 355,043 & $100.00$ & $100.00$ & $100.00$ & $100.00$ \\
& Flat+Basic  & 56 & 224 & 18,501  & 23,868  & 9,932   & 13,824  & $3.89$ & $3.41$ & $2.88$ & $3.89$ \\
& Basic+Order & 56 & 224 & 101,209 & 155,562 & 76,761  & 78,801  & $21.30$ & $22.24$ & $22.29$ & $22.19$ \\
& Flat+Order   & 56 & 224 & 4,221   & 6,171   & 2,568   & 3,491   & $0.89$ & $0.88$ & $0.75$ & $0.98$ \\
\bottomrule
\end{tabular}
\begin{tablenotes}
\item[1] The program was terminated before returning; therefore no results are reported.
\end{tablenotes}
\end{threeparttable}
}
\end{table*}

We further examine compilation latency by separating the cost of circuit construction from \ChannelIR{} and the cost of transpilation to a universal gate set. 
On the hypercube benchmarks, optimization improves not only the final circuit size but also the overall compilation workflow: 
lower-cost channel-LCU circuits are faster to construct and substantially easier to transpile, leading to clear reductions in total end-to-end compilation time. 
This effect becomes more pronounced as the benchmark size grows, indicating that the structural simplifications introduced by the channel-first pipeline benefit both the generated circuits and downstream compiler passes.
\begin{table*}[t]
\centering
\begin{minipage}{0.7\linewidth}
\centering

\setlength{\tabcolsep}{4pt}
\caption{Compilation times for the hypercube random-walk benchmark. Units are seconds.}
\label{tab:hypercube_compile_transpile}
\vspace{2ex}
\resizebox{.95\linewidth}{!}{
\begin{tabular}{l|cccc|cccc}
\toprule
\multirow{2}{*}{Method}
& \multicolumn{4}{c|}{Construct time (s)}
& \multicolumn{4}{c}{Transpile time (s)} \\
\cmidrule(lr){2-5}\cmidrule(lr){6-9}
& 
 $N=8$ & $N=12$  & $N=20$ & $N=28$ & 
 $N=8$ & $N=12$  & $N=20$ & $N=28$ \\
\midrule
Basic+Basic
 & 0.096 & 0.155   & 0.268 & 0.390
 & 6.034 & 19.044  & 60.697 & 88.230 \\
Flat+Basic
 & 0.125 & 0.195  & 0.337 & 0.498
 & 0.115 & 0.207  & 0.305 & 0.399 \\
Basic+Order
 & 0.074 & 0.118  & 0.214 & 0.301
 & 0.503 & 1.303  & 3.160 & 4.575 \\
Flat+Order
 & 0.100 & 0.152  & 0.258 & 0.372
 & 0.049 & 0.105  & 0.117 & 0.204 \\
Stinespring~\cite{qiskit2024}
 & 0.018 & 92.178  & $N/A$ & $N/A$
 & 291.160 & $N/A$  & $N/A$ & $N/A$ \\
\bottomrule
\end{tabular}
}
\end{minipage}%
\hspace{0.02\linewidth}
\begin{minipage}{0.256\linewidth}
\centering
\flushleft
\caption{Wall-clock simulation time for `mesolve' and our method.}
\resizebox{\linewidth}{!}{
        \begin{tabular}{lcc}
            \toprule
            $N$ & mesolve~\cite{qutip5} & Ours \\
            \midrule
            3 & $0.0013s$ & $0.09s$ \\
            5 & $0.30s$  & $0.22s$ \\
            7 & $904.49s$ & $0.89s$ \\
            8 & $N/A$  &  $3.03s$\\
            10 &  $N/A$ & $17.77s$ \\
            12 & $N/A$ & $188.03s$ \\
            \bottomrule
        \end{tabular}
        }
        \label{tab:sim_time_clcu}
\end{minipage}
\end{table*}

As complementary evidence of scalability, we also compare against QuTiP's `mesolve'~\cite{qutip5} on the TFIM benchmark. 
Although this comparison is not a phase-by-phase compiler measurement, it shows that our approach remains practical on instances for which the classical density-matrix solver already becomes prohibitively slow or fails to complete, suggesting that the compiled channel-based workflow scales more favorably in this regime.

\subsection{RQ2: Impact of the Two Optimization Techniques}

We next isolate the contribution of the optimization passes themselves. We first study block-encoding circuits in isolation, where the structural effect of monotone-control decomposition is most apparent. We then compare against an external control-flow optimization baseline to position our results relative to prior work.

\subsubsection{RQ2.1: Optimization Impact on Block-Encodings}

The first benchmark set isolates block-encoding circuits for Kraus operators expressed as Pauli sums. 
We use both TFIM-derived Kraus operators and random Pauli-sum instances, and compare three variants where available: 
the unoptimized baseline (Basic), conditional flattening (Flat) and monotone-control decomposition (Order).

\begin{table}[ht]
\centering
\caption{Metrics of block-encoding circuits from two LCU benchmarks.}
\resizebox{0.9\linewidth}{!}{
\begin{tabular}{c c l ccccc cccc}
\toprule
\multirow{2}{*}{Benchmark} & \multirow{2}{*}{\#Pauli terms} & \multirow{2}{*}{Method} & \multicolumn{5}{c}{Count} & \multicolumn{4}{c}{Ratio vs. No (\%)} \\
\cmidrule{4-8}\cmidrule{9-12}
& & & Qubits & Depth & $\#$Gates & $\#$CX & $\#$U3 & Depth & $\#$Gates & $\#$CX & $\#$U3 \\
\midrule
\multirow{3}{*}{TFIM-4-$A_0$}
& \multirow{3}{*}{13}
& Basic           & 8  & 986   & 1400  & 653  & 747  & 100.00 & 100.00 & 100.00 & 100.00 \\
& & Flat         & 12 & 311   & 521   & 207  & 286  & 31.54  & 37.21  & 31.70  & 38.29  \\
& & Order       & 9  & 54    & 88    & 32   & 56   & 5.48   & 6.29   & 4.90   & 7.50   \\
\midrule
\multirow{3}{*}{TFIM-8-$A_0$}
& \multirow{3}{*}{25}
& Basic           & 13 & 2944  & 4155  & 1973 & 2182 & 100.00 & 100.00 & 100.00 & 100.00 \\
& & Flat         & 18 & 630   & 1049  & 421  & 574  & 21.40  & 25.25  & 21.34  & 26.31  \\
& & Order      & 17 & 106   & 176   & 64   & 112  & 3.60   & 4.24   & 3.24   & 5.13   \\
\midrule
\multirow{3}{*}{TFIM-12-$A_0$}
& \multirow{3}{*}{37}
& Basic           & 18 & 7047  & 9595  & 4637 & 4958 & 100.00 & 100.00 & 100.00 & 100.00 \\
& & Flat         & 24 & 1008  & 1641  & 667  & 894  & 14.30  & 17.10  & 14.38  & 18.03  \\
& & Order       & 25 & 158   & 264   & 96   & 168  & 2.24   & 2.75   & 2.07   & 3.39   \\
\midrule
\multirow{3}{*}{RndPauli-4}
& \multirow{3}{*}{12}
& Basic     & 8  & 1870   & 2166   & 986    & 1180   & 100.00 & 100.00 & 100.00 & 100.00 \\
& & Flat   & 12 & 276    & 462    & 196    & 242    & 14.76  & 21.33  & 19.88  & 20.51  \\
& & Order  & 8  & 673    & 806    & 386    & 420    & 35.99  & 37.21  & 39.15  & 35.59  \\
\hline
\multirow{3}{*}{RndPauli-8}
& \multirow{3}{*}{120}
& Basic     & 15 & 66128  & 72404  & 34545  & 37859  & 100.00 & 100.00 & 100.00 & 100.00 \\
& & Flat   & 22 & 3163   & 5338   & 2352   & 2746   & 4.78   & 7.37   & 6.81   & 7.25   \\
& & Order  & 15 & 30325  & 34580  & 16807  & 17773  & 45.86  & 47.76  & 48.65  & 46.95  \\
\hline
\multirow{3}{*}{RndPauli-10}
& \multirow{3}{*}{502}
& Basic     & 19 & 487963 & 520824 & 251384 & 269440 & 100.00 & 100.00 & 100.00 & 100.00 \\
& & Flat   & 28 & 15264  & 24957  & 11325  & 12626  & 3.13   & 4.79   & 4.51   & 4.69   \\
& & Order  & 19 & 230358 & 261543 & 127532 & 134011 & 47.21  & 50.22  & 50.73  & 49.74  \\
\hline
\end{tabular}
}
\label{tab:LCU-benchmark1}
\end{table}

\textit{Discussion}. The results reveal a clear trade-off among circuit width, circuit depth, and overall gate count. 
In most cases, conditional flattening yields the largest reductions in depth and gate count, but increases circuit width in proportion to the control-register size. 
By contrast, the monotone-control decomposition approach (Order) introduces no additional control register and therefore preserves circuit width, making it more suitable for width-constrained architectures. 
Although its current performance is somewhat weaker, it could be further improved by combining it with conditional flattening to eliminate the remaining multi-controlled unitaries. 
Such a hybrid strategy may further reduce circuit depth, gate count, and ancilla resets while combining the strengths of both approaches.

\subsubsection{RQ2.2: Comparison with Existing Control-Flow Optimization}

To relate our structure-aware optimization to prior compiler work, we compare against Cline~\cite{wu2026cline}, which improves conditional flattening on Hamiltonian-style LCU benchmarks. The comparison is particularly relevant for evaluating the `Order' strategy. Cline adopts the general Hamiltonian $H_n = \sum_{P_j\in\{I,X,Y,Z\}^{\otimes n}} \alpha_j P_j$ as the LCU benchmark. In contrast, our approach explicitly identifies a basis set that covers all Pauli strings, enabling more efficient circuit construction and achieving better compression ratios, as shown in Table~\ref{tab:benchmark_cline_baseline}.
\begin{table}[htbp]
\centering
\caption{Benchmark comparison with Cline~\cite{wu2026cline} as the baseline.}
\resizebox{0.8\linewidth}{!}{
\setlength{\tabcolsep}{6pt}
\begin{tabular}{llrrrrrrrrrr}
\toprule
\multicolumn{2}{c}{} & \multicolumn{5}{c}{Count} & \multicolumn{5}{c}{Ratio vs. Cline (\%)} \\
\cmidrule(lr){3-7}\cmidrule(lr){8-12}
Name & Option & \#Depth & \#CX & \#H & \#S & \#T & Depth & \#CX & \#H & \#S & \#T \\
\midrule
\multirow{2}{*}{H3}
& Cline & 119 & 45  & 42  & 15 & 36 & 100.00 & 100.00 & 100.00 & 100.00 & 100.00 \\
& Order & 10  & 12  & 6   & 9  & 9  & 8.40   & 26.67  & 14.29  & 60.00  & 25.00  \\
\midrule
\multirow{2}{*}{H4}
& Cline & 158 & 60  & 56  & 20 & 48 & 100.00 & 100.00 & 100.00 & 100.00 & 100.00 \\
& Order & 10  & 16  & 8   & 12 & 12 & 6.33   & 26.67  & 14.29  & 60.00  & 25.00  \\
\midrule
\multirow{2}{*}{H6}
& Cline & 236 & 90  & 84  & 30 & 72 & 100.00 & 100.00 & 100.00 & 100.00 & 100.00 \\
& Order & 10  & 24  & 12  & 18 & 18 & 4.24   & 26.67  & 14.29  & 60.00  & 25.00  \\
\midrule
\multirow{2}{*}{H8}
& Cline & 314 & 120 & 112 & 40 & 96 & 100.00 & 100.00 & 100.00 & 100.00 & 100.00 \\
& Order & 10  & 32  & 16  & 24 & 24 & 3.18   & 26.67  & 14.29  & 60.00  & 25.00  \\
\bottomrule
\end{tabular}
}
\label{tab:benchmark_cline_baseline}
\end{table}

\paragraph{Remark: Generalizability to Hamiltonian Simulation.}
Although our framework is developed for open-system dynamics, its underlying
block-encoding formulation, together with LCU as a standard construction
primitive, also supports closed-system Hamiltonian simulation via QSVT~\cite{yuan2025cobblecompilingblockencodings}.
To demonstrate this broader applicability, we compare against representative
Trotter-based Hamiltonian simulation compilers~\cite{Campbell_2019,Li2022paulihedral,Cao2025marqsim}
on random Pauli Hamiltonians under a common error budget.

Figure~\ref{fig:qsvt_trotter} shows that our framework consistently requires
fewer gates at stringent error tolerances, in line with the more favorable
asymptotic scaling of QSVT ($\sim\log(1/\varepsilon)/\log\log(1/\varepsilon)$)
relative to Trotter-based methods ($\sim 1/\varepsilon$), albeit with a
higher upfront compilation cost.

\textit{Remark: Empirical Comparison with Cobble~\cite{yuan2025cobblecompilingblockencodings}.}
Cobble primarily studies the cost trade-off between LCU/Horner-style matrix-polynomial evaluation and QSVT, using unoptimized matrix-polynomial forms as the baseline; 
by contrast, our compiler already supports automatic expansion of matrix polynomials together with direct QSVT implementation, and therefore can faithfully reproduce the same functionality in this setting. 
Moreover, our compiler yields the same result on the QSVT benchmark in Cobble, as the instance implemented in Cobble uses Pauli $X$ as the underlying matrix, leaving essentially no additional space for structure-aware optimization. 

\begin{figure}
    \centering
    \includesvg[width=0.85\linewidth]{Figures/comp_qsvt_trotter}
    \caption{Gate count comparison between QSVT (ours) and Trotter-based
    Hamiltonian simulation compilers at a common error budget.
    Each bar shows the normalized fraction of gates relative to
    the unoptimized Trotter baseline.}
    \label{fig:qsvt_trotter}
\end{figure}

\subsection{RQ3: Correctness of the Lindbladian Frontend}
In this section, we investigate whether the Lindbladian frontend generates channel-simulation circuits whose empirical accuracy matches the theoretical error guarantees of the underlying short-time approximation.
\begin{figure}[b]
    \begin{minipage}[b]{0.45\textwidth}
        \raggedright
        \includesvg[width=0.95\textwidth]{Figures/cLCU_errs.svg} 
        \caption{Simulation error, measured by trace distance, of the channel-LCU algorithm compared with the theoretical upper bound. The evolution time is $\Delta t$.}
        \label{fig:err_channelLCU}
    \end{minipage}%
    \hspace{4mm}
    \begin{minipage}[b]{0.48\textwidth}
        \includesvg[width = 0.95\textwidth]{Figures/decay_only_Qmethod.svg}
        \caption{Simulation error of the higher-order expansion algorithm~\cite{Li2023simulating} for different expansion orders and decay parameters. The total evolution time is $T = 1$.}
         \label{fig:err_hoexp}
    \end{minipage}
\end{figure}

To evaluate correctness, we compare the error of the compiled simulation circuits against trusted reference evolutions and theoretical expectations. As shown in Fig.~\ref{fig:err_channelLCU}, the empirical error follows the predicted short-time behavior and remains within the theoretical upper bound across the tested time steps.

We further evaluate the higher-order expansion frontend in Fig.~\ref{fig:err_hoexp}. The results exhibit the expected trend that higher-order expansions achieve lower error, providing practical evidence that the frontend preserves the intended simulation accuracy.

\section{Related Works}
\paragraph{Intermediate representations in quantum compilers.}
In quantum compilers, most widely used toolchains remain centered on circuit-level abstractions~\cite{qiskit2024,CirqDevelopers_2025,Sivarajah_2021}, 
while several domain-specific systems raise the abstraction level for quantum simulation. For example, Paulihedral~\cite{Li2022paulihedral} uses structured Hamiltonian representations to optimize digital simulation beyond circuit level, and SimuQ~\cite{Peng2024simuq} formulates compilation for simulation kernels as a pattern-matching problem against analog quantum devices. 
Cobble~\cite{yuan2025cobblecompilingblockencodings} proposes the first language for programming with block-encodings and matrix polynomials, which supports direct optimization on the arithmetic structures of block-encoding operators. Our work continues this line of raising the compiler abstraction level, but does so around quantum channels: \ChannelIR{} treats Kraus- and Pauli-level structure as first-class compilation objects and serves as the basis of an end-to-end compiler for non-unitary dynamics.

\paragraph{Structured compilation and optimization.}
From a programming-languages perspective, compilation and optimization are often organized around structured intermediate representations and semantics-preserving rewrites, rather than purely local circuit transformations. Several works follow this line by manually identifying or automatically synthesizing~\cite{xu2022quartz, xu2023queso,xu2025optimize}
semantics-preserving rewrites for certain types of quantum programs, for example quantum control flows~\cite{Yuan2024tcomplexity,huang2024compiling} and uncomputation~\cite{sharma2025spare}, or even Dirac notation~\cite{xu2025automating}. 
Structured compilation has also been developed for Hamiltonian simulation and related block-encoding tasks. Several compilation frameworks for quantum simulation kernels~\cite{Li2022paulihedral,Peng2024simuq,Cao2025marqsim} propose high-level abstractions and intermediate representations (IR) for Hamiltonians. They leverage the expressive power of high-level abstractions to optimize digital quantum simulations beyond circuit level~\cite{Li2022paulihedral,Cao2025marqsim}, or formalize the compilation challenges of quantum simulations as pattern matching with analog quantum simulators~\cite{Peng2024simuq}, achieving the goal of optimizing resource overhead. As shown in our work, our \ChannelIR{} further elevates the abstraction level to quantum channels, achieves greater expressive power and serves as the foundation of an end-to-end compiler for the simulation of non-unitary dynamics. 

\paragraph{Quantum control-flow optimization methods.} 
Beyond generic quantum circuit optimizers~\cite{Amy_2019,Sivarajah_2021,xu2022quartz,xu2023queso,xu2025optimize}, several works optimize control-heavy quantum circuits more directly. One line of work simplifies control logic in a manner analogous to Boolean simplification~\cite{Schmitt2022tweedledum,Miller2003transform,qiskit2024,wu2026cline}, for example by rewriting truth-table representations or merging compatible control clauses. Another line treats control-flow optimization as the decomposition and rewriting of multi-controlled gates~\cite{li2022voqc,Yuan2024tcomplexity,sharma2025spare}, or reduces such overhead through specialized peripherals such as QROM~\cite{Babbush2018encoding}.
These techniques are most closely connected to our first optimization pass. Our second pass follows a structure-aware route at the IR level, using the Pauli-level algebraic information retained in \ChannelIR{} to optimize the implementation before the corresponding \SELECT{} multiplexors are lowered into circuit form.

\section{Data Availability Statement}

The repository containing the prototype implementation of the \ChannelIR{} and the evaluation scripts 
is currently available as an anonymous repository \cite{anon_lind}. 
This is a prototype code library for demonstration purposes only. 
The complete version of the code for reproduction will be provided during the artifact evaluation stage. 


\bibliographystyle{ACM-Reference-Format}
\bibliography{reference}

\clearpage
\appendix
\section{Proof Details for Propositions in \ChannelIR}

\begin{proof}[Proof of~\Cref{thm:soundness-of-rewrite-rules}]
It suffices to check each rewrite rule.

Rules \textbf{PS1} and \textbf{PS2} preserve the denotation of a Kraus expression by linearity:
\[
\llbracket \beta_1 P + \beta_2 P \rrbracket
= \beta_1 \llbracket P \rrbracket + \beta_2 \llbracket P \rrbracket
= (\beta_1+\beta_2)\llbracket P \rrbracket,
\]
and
\[
\llbracket 0 \cdot P \rrbracket = 0 = \llbracket \epsilon \rrbracket.
\]

For \textbf{K1}, the zero Kraus operator contributes nothing:
\[
0 \rho 0^\dagger = 0.
\]
Hence removing it does not change the channel.

For \textbf{K2}, a global phase cancels with its conjugate:
\[
(e^{i\theta}K)\rho(e^{i\theta}K)^\dagger
= e^{i\theta}K\rho e^{-i\theta}K^\dagger
= K\rho K^\dagger.
\]

For \textbf{C1}, the denotation of a channel is a sum over Kraus contributions, so permuting
the summands does not change the result.

For \textbf{C2}, let $K'_j=\sum_k u_{jk}K_k$ for a unitary matrix $U=(u_{jk})$.
Then for every density matrix $\rho$,
\[
\sum_j K'_j \rho K_j'^\dagger
=
\sum_j \left(\sum_k u_{jk}K_k\right)\rho
\left(\sum_\ell u_{j\ell}K_\ell\right)^\dagger
=
\sum_{j,k,\ell} u_{jk}\overline{u_{j\ell}}\, K_k \rho K_\ell^\dagger.
\]
Since $U$ is unitary, meaning that
\[
\sum_j u_{jk}\overline{u_{j\ell}} = \delta_{k\ell},
\]
the above simplifies to
\[
\sum_k K_k \rho K_k^\dagger.
\]
Therefore the transformed Kraus family denotes the same channel.

Rule \textbf{C2'} is the special case of \textbf{C2} for the $2\times 2$ unitary
\[
\begin{pmatrix}
a & b \\
-b^* & a^*
\end{pmatrix}.
\]

For \textbf{C3}, we compute
\[
\sum_{j=1}^{\ell} (a_jK)\rho(a_jK)^\dagger
=
\sum_{j=1}^{\ell}|a_j|^2\, K\rho K^\dagger
=
\left(\sum_{j=1}^{\ell}|a_j|^2\right)K\rho K^\dagger
=
\left(\sqrt{\sum_{j=1}^{\ell}|a_j|^2}\,K\right)\rho
\left(\sqrt{\sum_{j=1}^{\ell}|a_j|^2}\,K\right)^\dagger.
\]

Therefore every rewrite rule preserves denotation, and hence $C \Rightarrow_R C'$
implies $\llbracket C \rrbracket = \llbracket C' \rrbracket$.
\end{proof}

\begin{proof}[Proof of~\Cref{prop:kraus-rank-minimization}]
Let 
\[
C = \texttt{channel}\{K_1,\dots,K_n\}
\]
be a well-typed \ChannelIR{} program, and let $\mathcal{E} = \llbracket C \rrbracket$ denote its denotation. Let $r$ be the Kraus rank of $E$, i.e., the minimal number of nonzero Kraus operators required to represent $E$.

By definition of Kraus rank, there exists a Kraus representation
\[
\mathcal{E}(\rho) = \sum_{j=1}^r L_j \rho L_j^\dagger
\]
with $L_1,\dots,L_r \neq 0$, and no Kraus representation of $E$ uses fewer than $r$ nonzero operators. Extend this family to length $n$ by setting
\[
L_{r+1} = \cdots = L_n = 0.
\]
Then $\{L_1,\dots,L_n\}$ is also a valid Kraus representation of $E$.

Since both $\{K_1,\dots,K_n\}$ and $\{L_1,\dots,L_n\}$ are Kraus representations of the same channel, the unitary freedom of Kraus representations (\cite[Theorem~8.2]{NC10}) implies that there exists a unitary matrix $U = (u_{jk}) \in \mathbb{C}^{n \times n}$ such that
\[
L_j = \sum_{k=1}^n u_{jk} K_k, \qquad j = 1,\dots,n.
\]
Applying rewrite rule \textbf{C2}, which implements such a unitary mixing of Kraus operators, yields
\[
C_1 = \texttt{channel}\{L_1,\dots,L_r,0,\dots,0\}.
\]

Next, applying rule \textbf{K1} repeatedly removes the $n-r$ zero Kraus operators, producing
\[
C' = \texttt{channel}\{L_1,\dots,L_r\}.
\]
By Theorem~4.2, each rewrite rule preserves denotation, hence
\[
\llbracket C' \rrbracket = \llbracket C \rrbracket = \mathcal{E}.
\]
This establishes the existence of a rewrite sequence producing a Kraus representation with exactly $r$ nonzero operators.

For the rewrite complexity, note that rule \textbf{C3} is a special case of \textbf{C2}, so the above construction may be viewed as using one \textbf{C3} step, followed by one application of \textbf{C2}, and then at most $n-r$ applications of $K1$. This yields $O(n)$ total rewrites.

Alternatively, one may avoid a single $n \times n$ unitary by decomposing $U$ into a product of two-dimensional unitaries (by Givens rotations). Each such rotation acts on a pair of Kraus operators and is implemented by rule \textbf{C2'}, while diagonal phases are absorbed using rule \textbf{K2}. Using a standard elimination procedure, one can zero out the last $n-r$ Kraus operators using $O(n)$ applications of \textbf{C2'}, followed by at most $n-r$ applications of \textbf{K1}. 

This completes the proof.
\end{proof}
\section{Details in Structure-Aware Optimization}
\subsection{Proof details in Section VI}
We provide the proof of Proposition \ref{prop:existence} here. The correctness of \ref{prop:permute} is straightforward by 
noticing the definition of \SELECT and \PREPARE oracles, and the fact that permutation won't destroy the alignment of the coefficients and the unitaries.

\paragraph{Proof of Proposition \ref{prop:existence}}

\begin{proposition}[Existence of monotone-control decomposition]\label{prop:existence}
For an arbitrary set of $n$ unitary matrices $\{V_1, \cdots, V_n\}$, there exists a monotone-control decomposition $\{U_1, \cdots, U_n\}$ such that 
\(
\forall k \in \{1, \cdots, n\}, V_k = \Pi_{(l\leq_b k)} U_l.
\)

\end{proposition}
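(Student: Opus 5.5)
We read the statement with indices identified with binary control addresses, as in the problem formulation: $k$ ranges over a set of addresses closed downward under $\leq_b$ (e.g.\ $\{0,1\}^s$), and any unused address gets the identity. The product $\Pi_{(l\leq_b k)} U_l$ is taken in a fixed order compatible with $\leq_b$; we use increasing integer value of $l$, which is a linear extension of the bitwise order. The plan is to construct the $U_l$ by strong induction along this order, a Möbius inversion over the Boolean lattice adapted to non-commutative products.

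\textbf{Base case.} Set $U_0 = V_0$; if address $0$ is unused, set $V_0 = I$. Then the claim holds at $k=0$, since $0$ is the only address below itself.

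\textbf{Inductive step.} Fix $k$ and suppose $U_l$ has been defined for every $l < k$ (in integer order). Every $l \leq_b k$ with $l \neq k$ satisfies $l < k$, so
\[
W_k = \prod_{l \leq_b k,\ l \neq k} U_l
\]
is already determined, taken in increasing order of $l$. Since $k$ is the largest element of its down-set in this order, $U_k$ is the last factor. We therefore define
\[
U_k = W_k^{-1} V_k = W_k^{\dagger} V_k .
\]
This is unitary because products and inverses of unitaries are unitary. By construction $\prod_{l\leq_b k} U_l = W_k U_k = V_k$.

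\textbf{Consistency.} Defining $U_k$ never alters an earlier identity $V_{k'} = \prod_{l \leq_b k'} U_l$ with $k' < k$. Such a product only involves $l \leq_b k'$, hence $l \leq k' < k$, so it does not contain $U_k$. Induction over all addresses then finishes the proof.

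\textbf{Pauli specialization.} When the $V_k$ are Pauli strings, each $U_k$ is a Pauli string up to phase. In symplectic form, $r(U_k) = \bigoplus_{l\leq_b k} \mu$-style alternating XOR of the $r(V_l)$, which corresponds to the inversion in Stage II.

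\textbf{Main obstacle.} The only delicate point is ordering in the non-commutative case. The product convention must place $U_k$ last, or at least at a fixed position, within every product in which it appears. Choosing integer order guarantees this, and it is exactly why the construction can solve for $U_k$ directly without disturbing previously satisfied equations. The same ordering must also be used in the multiplexor circuit itself.
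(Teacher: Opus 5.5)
Your proof is correct and is essentially the paper's argument. The paper inducts on the Hamming weight of $k$ rather than on integer order, but in both versions every $l <_b k$ is handled before $k$, and $U_k$ is obtained by multiplying the inverse of the product of the already-constructed factors by $V_k$. Your explicit product convention, with $U_k$ as the last factor and $U_k = W_k^{\dagger} V_k$, makes precise what the paper writes loosely as $(\Pi_{l<_b k} U_l^{\dagger}) V_k$; for non-commuting factors that formula is only correct if the $U_l^{\dagger}$ are taken in reverse order.
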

\begin{proof}
We prove this by induction on the Hamming weight of the binary representation of the number $k$, denoted as $w(k)$. 

$\bullet \ \textbf{Base case}$. For simplicity's sake, we set $V_0 = U_0 = I$. Then for any $ 0\leq d \leq \floor{\log n}$, $w(2^d) = 1$, so it is straightforward to determine that $U_{2^d} = P_{2^d}$.  

$\bullet \ \textbf{Inductive step}$. Now, assume that for all indices $l$ with $w(l) \leq m$, we can construct matrices $U_j$ for all $j \leq_b l$ such that $V_l = \Pi_{j\leq_b l} U_j$. For $w(k) = m + 1$, denote $S = \{l |l <_b k\}$, where $l <_b k$ means $l \leq_b k$ but $l \neq k$. Crucially for any $l \in S$, since $w(k) = m + 1$, $l$ can only have at most $m$ bits set to $1$, i.e., $w(l) \leq m$. Therefore $U_k$ can be directly calculated by an inverse of the decomposition: $U_k$ = $(\Pi_{l <_b k}U_l^{\dagger}) V_k$, since all of the $U_l$ with $w(l) \leq m$ has already been constructed. Notice that there might exist $n_s > n$ but $w(n_s) < w(n)$, (e.g. $n = 11$ but $n_s = 12$), but it won't bother since $U_{n_s}$ would have no influence on the construction of $U_n$, we can always view such $U_{n_s}$ and $V_{n_s}$ as `irrelevant terms'. 
Therefore the inductive step is completed. 
\end{proof}

\subsection{Algorithm details}

In the main text we only present a high-level overview of algorithm, designed to minimize the weighted-control cost of the Pauli-sum LCU.
Here we present the complete algorithm details, including the textual description and pseudocode for the main function and its subroutines. 

\paragraph{Pseudocode for algorithms}
\begin{algorithm}[t]
\caption{Find An Optimal Order of Pauli terms}
\label{alg:find-optimal-order}
\begin{algorithmic}[1]
\Require Pauli terms $\mathcal{T}=\{(P_i,\phi_i)\}_{i=1}^{m}$
\Ensure Mode table $\mathcal{M}_{\mathrm{add}}$, phase-corrected modes $\mathcal{M}_{G}$, control size $s$
\Statex \textbf{(Find Optimal Coverage Subspace)}
\State Separate identity term $I^{\otimes n}$ from $\mathcal{T}$ and record $\phi_0$
\State Convert each non-identity $P_i$ to symplectic vector \(v_i=(x_i\mid z_i)\in\mathbb{F}_2^{2n}\)
\State Build matrix $M$ (rows are $v_i$) and phase array $\Phi$; $s \gets \lceil \log_2 m \rceil$.

\State $(M_z,\Phi_z)\gets \textsc{Sort}(M,\Phi,\texttt{z})$; $(M_x,\Phi_x)\gets \textsc{Sort}(M,\Phi,\texttt{x})$

\Comment{primary: weight $\mathrm{w}_s$; secondary: tie-breaking on $x/z$ parts}

\State $(S_z,\mathcal{U}_z)\gets \textsc{GreedyBasisSelection}(M_x,s)$
\State $(S_x,\mathcal{U}_x)\gets \textsc{GreedyBasisSelection}(M_z,s)$

\If{$|\mathcal{U}_z|>|\mathcal{U}_x|$}
    \State $(M^\star,\Phi^\star,S^\star,\mathcal{U}^\star)\gets(M_x,\Phi_x,S_z,\mathcal{U}_z)$
\Else
    \State $(M^\star,\Phi^\star,S^\star,\mathcal{U}^\star)\gets(M_z,\Phi_z,S_x,\mathcal{U}_x)$
\EndIf

\Statex \textbf{(Assign Covered Paulis)}
\State $M_{\mathrm{sel}}\gets M^\star[S^\star]$; 

\State $\mathcal{M}_{\mathrm{sub}}\gets\{(\mathrm{Pauli}(v_i),\mathrm{bin}(2^{i-1}))\}_{i=1}^{d}$ \Comment{$v_i$ are rows of $M_{\mathrm{sel}}$, one-hot controls of length $s$}
\For{$u\in \mathcal{U}^\star\setminus\{b_1,\dots,b_d\}$}
  \State Decode $u=\bigoplus_i \alpha_i b_i$ by elimination, and set $c(u)=\bigoplus_i \alpha_i\,\mathrm{bin}(2^{i-1})$
  \State $\mathcal{M}_{\mathrm{sub}}\gets \mathcal{M}_{\mathrm{sub}}\cup\{(\mathrm{Pauli}(u),c(u))\}$
\EndFor
\Statex \textbf{(Assign Remaining Paulis)}
\State $\mathcal{R}\gets\{v\in M^\star:\ v\notin \mathcal{U}^\star\}$
\State $\mathcal{M}_{\mathrm{add}}\gets\textsc{AssignAdditionalModes}(s,\mathcal{M}_{\mathrm{sub}},\mathcal{R})$

\State Assign $I^{\otimes n}$ to control string $0^s$ 
\State Sort $\mathcal{M}_{\mathrm{add}}$ by integer value of control strings

\Statex \textbf{(Get Decomposed Paulis and Phases)}
\State $\mathcal{M}_{G}\gets\textsc{InvertModesWithPhases}(s,\mathcal{M}_{\mathrm{add}},\mathrm{\Phi}^\star,\phi_0)$

\State \Return $\mathcal{M}_{\mathrm{add}},\mathcal{M}_{G},s$
\end{algorithmic}
\end{algorithm}

\begin{algorithm}[t]
\caption{\textsc{AssignSubspaceModes} (compact)}
\begin{algorithmic}[1]
\Require selected generators $\{b_i\}_{i=1}^{d}$, covered set $\mathcal U_{\mathrm{cov}}$, control width $w$
\Ensure subspace mode set $\mathcal M$

\State $\mathcal M \gets \varnothing$
\For{$i=1,\dots,d$}
  \State assign one-hot control string $c_i \gets \mathrm{bin}(2^{i-1})$ (length $w$)
  \State $\mathcal M \gets \mathcal M \cup \{(\mathrm{Pauli}(b_i), c_i)\}$
\EndFor

\Statex Build pivot/elimination table from $\{(b_i,c_i)\}_{i=1}^d$ over $\mathbb F_2$

\For{each $u \in \mathcal U_{\mathrm{cov}}$ not already in $\{b_i\}$}
  \State $v \gets u,\; c \gets 0^w$
  \While{$v \neq 0$}
    \State find leading pivot of $v$, retrieve corresponding basis row $(r,\gamma)$
    \State $v \gets v \oplus r,\quad c \gets c \oplus \gamma$
  \EndWhile
  \State $\mathcal M \gets \mathcal M \cup \{(\mathrm{Pauli}(u), c)\}$
\EndFor

\State \Return $\mathcal M$
\end{algorithmic}
\end{algorithm}

For the one-hot generators extracted from $\mathcal M_{\mathrm{sub}}$, we write
$b(u)=\bigoplus_{i:\,u_i=1} b_i$ for each $u\in\{0,1\}^d$.

\begin{algorithm}[t]
\caption{\textsc{AssignAdditionalModes} (compact)}
\begin{algorithmic}[1]
\Require subspace mode set $\mathcal M_{\mathrm{sub}}$, remaining vectors $\mathcal R$, control width $s$
\Ensure completed mode table $\mathcal M_{\mathrm{add}}$

\State $\mathcal M_{\mathrm{add}} \gets \mathcal M_{\mathrm{sub}}$
\State Extract the one-hot entries in $\mathcal M_{\mathrm{sub}}$ as
$\{(\mathrm{Pauli}(b_i),\mathrm{bin}(2^{i-1}))\}_{i=1}^{d}$
\State $r \gets s-d$

\For{$m=1,\dots,2^r-1$}
  \If{$\mathcal R=\varnothing$}
    \State \textbf{break}
  \EndIf
  \State $p \gets \mathrm{bin}(m)$ on $r$ bits, and $\mathcal C \gets \{0,1\}^{d}\setminus\{0^d\}$

  \If{$|\mathcal R|>(2^r-m+1)(2^d-1)$}
    \State choose $v_0\in\arg\min_{v\in\mathcal R}\mathrm{w}_s(v)$ and assign it to $p\|0^d$
    \State use $v_0\oplus b(u)$ as the reference for each $u\in\mathcal C$
    \State remove $v_0$ from $\mathcal R$
  \Else
    \State use $b(u)$ as the reference for each $u\in\mathcal C$
  \EndIf

  \Statex Greedily match unused low-bit strings to remaining vectors
  \While{$\mathcal R\neq\varnothing$ and some $u\in\mathcal C$ is unused}
    \For{each $v\in\mathcal R$}
      \State choose an unused $u(v)\in\arg\min_{u\in\mathcal C}\mathrm{w}_s(v\oplus \mathrm{ref}(u))$
    \EndFor
    \State pick the pair $(v^\star,u^\star)$ with globally minimum mismatch
    \State $\mathcal M_{\mathrm{add}} \gets \mathcal M_{\mathrm{add}}\cup\{(\mathrm{Pauli}(v^\star),p\|u^\star)\}$
    \State remove $v^\star$ from $\mathcal R$ and mark $u^\star$ used
  \EndWhile
\EndFor

\State Let $\mathcal F$ be the set of unused nonzero control strings of length $s$
\While{$\mathcal R\neq\varnothing$ and $\mathcal F\neq\varnothing$}
  \State sort $\mathcal F$ by
  $\bigl(\mathrm{hw}(c),\sum_{l<\mathrm{int}(c)}\mathrm{w}_s(P_l),\mathrm{int}(c)\bigr)$
  \State choose the first $c\in\mathcal F$, and let $u$ be its last $d$ bits
  \State choose $v^\star\in\arg\min_{v\in\mathcal R}\mathrm{w}_s(v\oplus b(u))$
  \State $\mathcal M_{\mathrm{add}} \gets \mathcal M_{\mathrm{add}}\cup\{(\mathrm{Pauli}(v^\star),c)\}$
  \State remove $v^\star$ from $\mathcal R$ and remove $c$ from $\mathcal F$
\EndWhile

\State \Return $\mathcal M_{\mathrm{add}}$
\end{algorithmic}
\end{algorithm}

\begin{algorithm}[t]
\caption{\textsc{InvertModesWithPhases} (compact)}
\begin{algorithmic}[1]
\Require mode table $\mathcal M_{\mathrm{add}}$, phase lookup $\Phi^\star$, zero-term phase $\phi_0$, control width $s$
\Ensure phase-corrected productive modes $\mathcal M_G$

\State For each control string $b\in\{0,1\}^s$, initialize target Pauli $P_b\gets I^{\otimes n}$,
target phase $\varphi_b\gets 0$, productive Pauli $g_b\gets I^{\otimes n}$,
productive phase $\theta_b\gets 0$, and relevance flag $r_b\gets 0$

\ForAll{$(P,b)\in\mathcal M_{\mathrm{add}}$}
  \If{$P$ appears in the original Pauli set}
    \State $P_b\gets P$, $\varphi_b\gets \Phi^\star(P)$, $r_b\gets 1$
  \ElsIf{$P=I^{\otimes n}$ and $\phi_0\neq 0$}
    \State $P_b\gets I^{\otimes n}$, $\varphi_b\gets \phi_0$, $r_b\gets 1$
  \EndIf
\EndFor
\State $\varphi_{0^s}\gets \phi_0$

\For{each relevant address $b$ in increasing integer order}
  \If{$\mathrm{hw}(b)=1$}
    \State $g_b\gets P_b$, $\theta_b\gets (\varphi_b-\phi_0)\bmod 4$
  \Else
    \State $Q\gets P_b$, $\theta\gets \varphi_b$
    \For{each $c<_b b$ in increasing integer order}
      \If{$g_c\neq I^{\otimes n}$ or $\theta_c\not\equiv 0\pmod 4$}
        \State remove the contribution of $g_c$ from $Q$, and add the induced overlap phase to $\theta$
      \EndIf
    \EndFor
    \State $g_b\gets Q$, $\theta_b\gets \theta\bmod 4$
  \EndIf
\EndFor

\Statex Reconstruct the phase implied by the productive modes
\For{each $b\in\{0,1\}^s$}
  \State multiply all nontrivial $g_c$ with $c\leq_b b$ in increasing integer order
  \State denote the resulting accumulated phase by $\widehat{\varphi}_b$
\EndFor

\State $\phi_{\mathrm{ad}}(b)\gets(\varphi_b-\widehat{\varphi}_b)\bmod 4$ for all $b$
\ForAll{$(P,b)\in\mathcal M_{\mathrm{add}}$}
  \State $\theta_b\gets \theta_b+\phi_{\mathrm{ad}}(b)$
\EndFor

\State $\mathcal M_G\gets \{(g_b,b,\theta_b): g_b\neq I^{\otimes n}\text{ or }\theta_b\not\equiv 0\pmod 4\}$
\State \Return $\mathcal M_G$
\end{algorithmic}
\end{algorithm}

\paragraph{Text Description}
We present text description for these subroutines that appear in the main function:

 $\bullet\ $ \textsc{Sort}: We sort the matrix whose rows are Pauli symplectic vectors. The primary key is the Pauli weight $\mathrm{w}_s$; ties are broken by the $x/z$ parts according to the chosen criterion.

$\bullet\ $ \textsc{GreedyBasisSelection}: We incrementally construct a generator set $\mathcal{M}_{sub}$ in symplectic space. At each step, each candidate is scored by subspace-growth gain per Pauli weight; we select the best one and update the generated subspace. The procedure stops when no positive gain remains, when $|\mathcal{M}_{sub}|=s$, or when the remaining control-address capacity cannot cover all uncovered Paulis.

$\bullet\ $ \textsc{AssignAdditionalModes}: After the subspace is fixed, uncovered Paulis are assigned to remaining control addresses by a hierarchical greedy rule: we first use unused high-bit prefixes and, within each prefix, choose low-bit values that minimize weight mismatch. If nonzero low-bit values are insufficient, we additionally allow the zero low-bit value with a lightweight reference offset; any leftover terms are then filled into remaining free addresses by the same minimum-mismatch principle.

$\bullet\ $ \textsc{InvertModesWithPhases}: After all Pauli positions are fixed, we invert
\(
P_j=\prod_{l\leq_b j} g_l\)
to recover the $g_l$ operators and their phases. The procedure builds a target $P$-table and then reconstructs the $g$-table by removing previously accumulated contributions while accounting for overlap phases from non-commuting Pauli products, ensuring consistency with the original Pauli set.

\section{Details About Compilation Accuracy }


To address the last research question in the main text, we designed a lightweight experiment to verify the accuracy of the compiled circuits generated by the Lindbladian frontend. 
Here in the appendix we provide comprehensive details of the experiment. 

\subsection{Round I: Channel-LCU Circuits} 
According to~\cite{cleve2017lindblad}, the error produced by the channel-LCU circuit $\mathcal{M}_\delta$ simulating the Lindblad evolution $e^{\delta \mathcal{L}}$ should obey the upper bound as follows: 
\[
\|\mathcal{M}_\delta - e^{\mathcal{L}\delta}\|_\diamond \leq 5(\delta\|\mathcal{L}\|_{ops})^2,
\]
where 
\[
\mathcal{L}_{ops} = \|H\| + \sum_{j=1}^{m}\|L_j\|^2 \ \textrm{for} \ \mathcal{L}\rbra{\rho} = -\ii [\rho, H] + \sum_j \rbra*{L_j\rho L_j^{\dagger} - \frac{1}{2}\cbra*{L_j^{\dagger}L_j, \rho}}. 
\]
We use the Transversal-field Ising model (TFIM) in Example \ref{example:tfim} as the test benchmark, and use both the superoperator from direct calculation (for $N \leq 6$) and simulation results from QuTiP~\cite{qutip5}'s \textit{mesolve} (a classical differential equation solver) (for $N = 7$) as baseline results. We compare the simulated error with the theoretical error bound, which is shown in Figure \ref{fig:err_channelLCU}. 
We also illustrate the simulation time of both QuTiP mesolve and our method in Table \ref{tab:sim_time_clcu}; 
we observe that a TFIM model with \(N = 8\) caused QuTiP's `mesolve' function to crash due to exceeding memory limits. In contrast, our method accommodates \(N = 12\) while also requiring less simulation time.
\begin{figure}[b]
    \begin{minipage}[b]{0.6\textwidth}
        \raggedright
        \includesvg[width=0.9\textwidth]{Figures/cLCU_errs.svg} 
        \captionof{figure}{Simulation error (trace distance) compared with theoretical upper bounds; The evolution time is $\Delta t$.}
        \label{fig:err_channelLCU_ap}
    \end{minipage}%
    \hspace{4mm}
    \begin{minipage}[b]{0.3\textwidth}
    \flushleft
        \begin{tabular}{lcc}
            \toprule
            $N$ & mesolve~\cite{qutip5} & Ours \\
            \midrule
            3 & $0.0013s$ & $0.09s$ \\
            5 & $0.30s$  & $0.22s$ \\
            7 & $904.49s$ & $0.89s$ \\
            8 & $N/A$  &  $3.03s$\\
            10 &  $N/A$ & $17.77s$ \\
            12 & $N/A$ & $188.03s$ \\
            \bottomrule
        \end{tabular}
        \vspace{8ex}
        \vfill
        \captionof{table}{Simulation time cost for mesolve and our method.}
         \label{tab:sim_time_clcu_ap}
    \end{minipage}
\end{figure}

\subsection{Round II: Circuits from Higher-order Expansions}
The construction of the quantum channel based on the higher-order expansion of the Lindbladian evolution operator is way more complex than the channel-LCU for the first-order approximation.
We provide an insight derived from Duhamel's principle and directly provide the explicit construction of the channel for the $k$-th-order expansion. For more details please refer to \cite{Li2023simulating}. 
We test the accuracy of the circuits generated from the higher-order expansion Lindbladian frontend. The quantum channel is generated by expanding the evolution operator $e^{\mathcal{L}t}$ to higher orders following Duhamel's principle. 
\begin{definition}[Quantum channel for $k$-th order expansion of Lindbladian evolution operator]
For a differential equation with the form of $u'(t) = Lu + f(t, u(t))$, Duhamel's principle states that the solution can be expressed as:
\[
u(t) = e^{tL}u_0 + \int_0^t e^{(t-s)L}f(s,u(s))ds
\]
Let $\mathcal{L}$ be a Lindbladian superoperator, then $\mathcal{L}$ can be similarly decomposed into a drifting part $\mathcal{L}_D$ and a jump part $\mathcal{L}_J$, 
where 
\[
\mathcal{L}_D(\rho) = -\ii [\rho, H] - \frac{1}{2}\sum_j \cbra*{L_j^{\dagger}L_j, \rho} = J\rho + \rho J^{\dagger}, \quad \mathcal{L}_J(\rho) = \sum_j L_j\rho L_j^{\dagger}.
\]
Applying Duhamel's principle first time we obtain: 
\[
\rho_t = e^{\mathcal{L}_Dt}(\rho_0) + \int_0^t e^{\mathcal{L}_D(t-s)}(\mathcal{L}_J\rho_s)ds
\]
By iteratively applying Duhamel's principle, we can derive the $k$-th-order expansion of the solution as follows:

\[
\begin{aligned}
\rho_t &= e^{\mathcal{L}_Dt}(\rho_0)  \\
&+ \sum_{k=1}^{K}\int_{0\leq s_1 \leq \cdots \leq s_k \leq t} e^{\mathcal{L}_D(t-s_k)}\mathcal{L}_J e^{\mathcal{L}_D(s_k-s_{k-1})}\cdots \mathcal{L}_J e^{\mathcal{L}_Ds_1}(\rho_0) ds_1\cdots ds_k \\
& + O(t^{K+1}) 
\end{aligned}
\]

Therefore we construct the Kraus operators as follows: 
$A_0 = e^{Jt}$, and for $j \geq 1$,

\[
A_j = \sqrt{\hat{w}_{j_k}\cdots\hat{w}_{(j_k, \cdots, j_1)}} e^{J (t - \hat{x}_{j_k})}L_{l_k} \cdots e^{J(\hat{x}_{(j_k,\cdots,j_2)} - \hat{x}_{(j_k, \cdots, j_1)})}L_{l_1}e^{J\hat{x}_{(j_k, \cdots,j_1)}},
\]

where $\hat{w}_{j_k}\cdots\hat{w}_{(j_k, \cdots, j_1)}$ is the weight of the $k$-th order expansion term, and $\hat{x}_{(j_k, \cdots, j_1)}$ is the sampled quadrature time points for the $k$-th-order expansion term. 
Both sets of values are generated by the quadrature method, and we use the Gauss-Legendre quadrature method in our implementation.

We further use a $K'$-th-order expansion of the drifting part $e^{Jt} = \sum_{k=0}^{K'}\frac{(Jt)^k}{k!}$, 
and the final quantum channel is constructed as $\mathcal{M}(\rho) = \sum_j A_j \rho A_j^{\dagger}$.
\end{definition}

Then we test the simulation accuracy of the circuit generated from the above quantum channel construction. We apply the decay model in Section 3 as the test benchmark, 
and use the superoperator from direct calculation as the baseline result. We divide the total evolution time $T = 1$ into $M$ segments,
simulate the evolution for one segment, and then calculate the trace distance $dist = \|\rho_{\text{sim}} - \rho_{\text{baseline}}\|_1$ between the simulated state and the baseline state.

The final error rate is calculated as $dist * M$, which is the upper bound of the error rate for the whole evolution.
The results are shown in Figure \ref{fig:decay_qmethod}.
\begin{figure}
    \centering
    \includegraphics[width=0.9\linewidth]{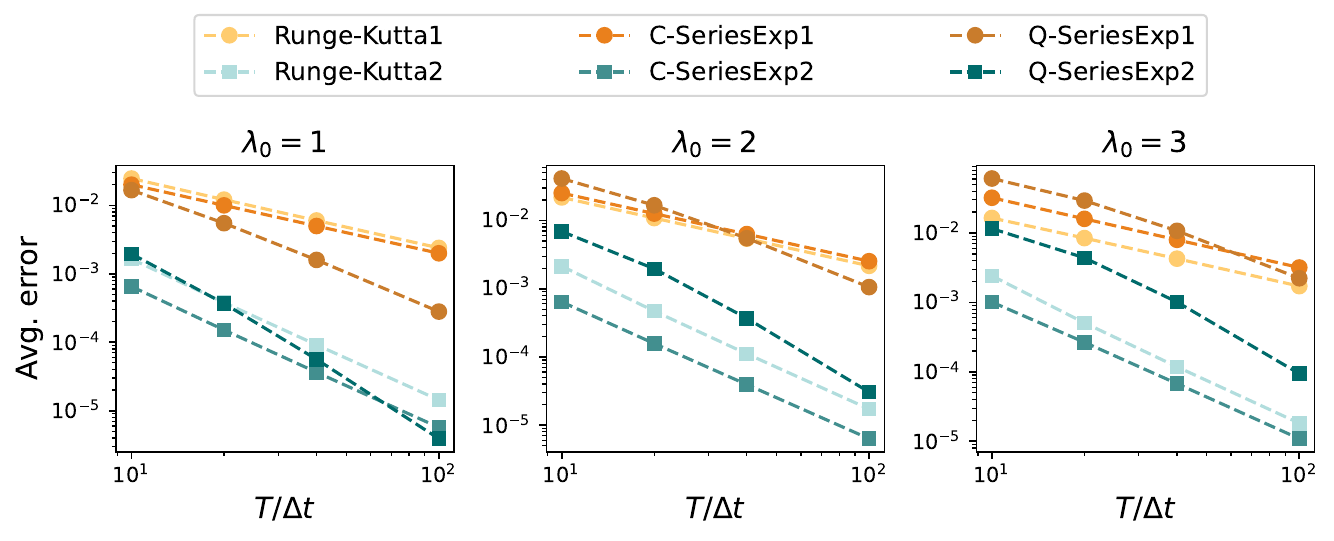}
    \caption{Simulation accuracy of the quantum channel generated from higher-order expansion Lindbladian frontend compared with the classical baseline.}
    \label{fig:decay_qmethod}
\end{figure}
\end{document}